\documentclass[12pt]{article}
\usepackage{natbib}
\usepackage{verbatim}
\usepackage{microtype}
\usepackage{enumitem}
\usepackage{booktabs} 
\usepackage{color,graphicx}
\usepackage{comment}
\usepackage{amssymb,amsmath,dsfont,bm,mathtools,amsthm, bbm}
\usepackage{thmtools, thm-restate}
\usepackage{hyperref,xcolor}



\newtheorem{theorem}{Theorem}

\newtheorem{lemma}[theorem]{Lemma}

\usepackage{multirow}
\usepackage{booktabs}
\usepackage{subcaption}
\usepackage{adjustbox}

\usepackage{arxiv}
\usepackage[utf8]{inputenc} 
\usepackage[T1]{fontenc}    
\usepackage{amsfonts}       
\usepackage{nicefrac}       
\usepackage{microtype}      
\usepackage[linesnumbered,ruled,vlined]{algorithm2e}
\SetKwInput{KwInput}{Input}                
\SetKwInput{KwOutput}{Output}  


\author{{ Shrikant Saxena } 
		\\
	Department of Computer Science\\
	Indian Institute of Technology (IIT) \\
	Ropar, India \\
	\texttt{2017csb1111@iitrpr.ac.in} \\
	\And
	{Shweta Jain} \\
	Department of Computer Science\\
	 Indian Institute of Technology (IIT) Ropar, India\\
	\texttt{shwetajain@iitrpr.ac.in} 
}
\title{Exploring and Mitigating Gender Bias in Recommender Systems with Explicit Feedback}

\begin{document}
\maketitle
\begin{abstract}
  Recommender systems are indispensable because they influence our day-to-day behavior and decisions by giving us personalized suggestions. Services like Kindle, Youtube, and Netflix depend heavily on the performance of their recommender systems to ensure that their users have a good experience and to increase revenues. Despite their popularity, it has been shown that recommender systems reproduce and amplify the bias present in the real world. The resulting feedback creates a self-perpetuating loop that deteriorates the user experience and results in homogenizing recommendations over time. Further, biased recommendations can also reinforce stereotypes based on gender or ethnicity, thus reinforcing the filter bubbles that we live in. In this paper, we address the problem of gender bias in recommender systems with explicit feedback. We propose a model to quantify the gender bias present in book rating datasets and in the recommendations produced by the recommender systems. Our main contribution is to provide a principled approach to mitigate the bias being produced in the recommendations. We theoretically show that the proposed approach provides unbiased recommendations despite biased data. Through empirical evaluation on publicly available book rating datasets, we further show that the proposed model can significantly reduce bias without significant impact on accuracy. Our method is model agnostic and can be applied to any recommender system. To demonstrate the performance of our model, we present the results on four recommender algorithms, two from the K-nearest neighbors family, UserKNN and ItemKNN, and the other two from the matrix factorization family, Alternating least square and Singular value decomposition.
\end{abstract}


\keywords{Book Recommender systems, Gender Bias, Fairness}

\maketitle
\section{Introduction}
Recommender systems influence a significant portion of our digital activity. They are responsible for keeping the user experience afresh by recommending varied items from a catalog of millions of items and also adapt their recommendations according to the personality and taste of the user. Therefore, a sound recommender system may go a long way in improving user experience quality, hence the user retentivity of a digital outlet.

Recommender systems have historically been judged on their accuracy\cite{herlocker,shani}. When it is concerned with other factors such as novelty, user satisfaction, and diversity\cite{neilHurley, zeigler,knijnenburg}, the focus continues to be just on the satisfaction of the information needs of the users. Although of immense importance to the relevance of a recommender system, these criteria do not capture the complete picture. 
In recent years, the public and academic community have scrutinized artificial intelligence systems regarding their fairness. It has been observed that the results generated by various recommender systems reflect the social biases that exist in human stratum\cite{ekstrand,musicRecommender,fairrecsys}. Scholars have focused on identifying, quantifying, and mitigating the bias present in the results generated by recommendation systems. \citet{multisidedfairness} presents a taxonomy of classes for fair recommendation systems. The author suggests different recommendation settings with fairness requirements such as fairness for only users, fairness for only items, and fairness for both users and items. Our work falls into fairness for only items category where bias is shown by a particular set of users against a specific set of items in the dataset. In particular, we are interested in studying and eliminating users' biasedness against the items associated with a specific gender in recommendation systems.

Bias prevention approaches can be classified according to the phase of the data mining process in which they operate: pre-processing, in-processing, and post-processing methods.  Pre-processing methods aim to control distortion of the training set. In particular, they transform the training dataset so that the discriminatory biases contained in the dataset are smoothed, hampering the mining of unfair decision models from the transformed data. In-processing methods modify recommendation algorithms such that the resulting models do not entail unfair decisions by introducing a fairness constraint in the optimization problem. Lastly, post-processing methods act on the extracted data mining model results instead of the training data or algorithm. The method presented in our work is a hybrid of a pre-processing phase and a post-processing phase.

Two prominent studies have focused on gender bias in recommender systems. The work by \citet{musicRecommender} establishes the existence of bias in the results of the music recommender systems, and the work by \citet{ekstrand} focuses on bias shown by Collaborative Filtering (CF) algorithms while recommending books written by women authors. Both the studies establish that the CF algorithms produced biased results after being fed the biased data from various socio-cultural factors. While both the works focus just on showing the existence of bias in the presence of the users' implicit feedback, we also consider the explicit feedback ratings and the bias that may arise out of it. Thus, our model handles the case when the items associated with specific gender might have received worse feedback than they otherwise ought to achieve by a set of users. We go one step further and propose a model to mitigate these biases by quantifying a particular user's bias and debiasing his or her feedback ratings. We theoretically show that the debiased ratings are unbiased estimators of the true preference of the user. Once the ratings are debiased, they are fed into the recommender algorithms as input to produce recommendations for the desired set of users. Since the recommender system is now fed with the debiased ratings, the resulting recommendations are free from the bias factor and avoid a self-perpetuating loop in the future.

The bias of an individual user reflects his or her taste. However, the KNN based algorithms produce recommendations based on similar characteristics between a set of users and naive implementation of these algorithms reflects the bias of one user in the recommendations produced for the other user. While not directly comparing the rating history of different users or items, Matrix Factorization algorithms rely on deriving latent factors, which depend on the rating history. Both the approaches make the system increasingly biased and homogenized after users interact with their biased recommendations and generate data for the next iteration. The above discussion suggests that though it is necessary to reflect the user's preference in the recommendations produced for him or her to achieve accuracy, it is equally necessary to prevent the bias of one user from reflecting in the recommendations of another similar user. Our research focuses on this particular objective. 

Our debiased ratings assure that the biases of one user do not affect other users; however, it may lead to loss of accuracy because of not reflecting the user's own preferences. We introduce a new step called preference correction which injects the user's preference parameter into his/her own debiased recommendation to maintain the accuracy of the system. On the publicly available Book-Crossing dataset (\cite{bookCrossing}) and Amazon Book Review dataset (\cite{amazon}), we empirically show that this approach retained the significant reduction in bias and had minimal effect on the accuracy of the system. The bias reflected in the recommendations produced by the UserKNN, ItemKNN, ALS, and SVD algorithms is reduced by as much as $42.39\%$, $37.65\%$, $26.51\%$, and $41.43\%$ respectively for the Amazon dataset and by $37.82\%$, $30.73\%$, $24.99\%$, and $32.34\%$ for the Book-Crossing dataset. When measured with respect to Root Mean Squared Error(RMSE), the final accuracy loss in the case of the Amazon dataset comes out to be $7.8\%$, $11.96\%$, $12.49\%$, and $10.38\%$ respectively for the four algorithms. In the case of the Book-Crossing dataset, the RMSE loss comes out to be $13.86\%$, $18.13\%$, $11.41\%$, and $12.89\%$ respectively. In particular, the following are our main contributions.

\subsection{Contributions}
\begin{itemize}
    \item We propose a model to quantify the gender bias in the recommender system when explicit feedback is present.
    \item We propose a principled approach to debias the ratings given and theoretically show that the debiased ratings represent the unbiased estimator of the true preference of the user.
    \item We empirically evaluate our model on publicly available book datasets and show that the approach significantly reduced the biasedness in the system. To show the generality of our proposed approach, we show the results on four algorithms, UserKNN, ItemKNN, ALS, and SVD.
    \item In order to further enhance the accuracy of the debiased system, we propose an approach of preference correction that respects the user's own preferences towards his/her recommendations. We show that the final recommender system significantly reduces the bias in the system while not deteriorating the accuracy much.
\end{itemize}

\section{Related Works}
The problem of gender bias and discrimination has received lots of attention in recent works \cite{datamining11}. Many proposals like \cite{datamining29}, \cite{datamining28}, \cite{datamining31}, \cite{datamining25}, \cite{datamining26}, \cite{datamining30} are dedicated to detecting and measuring the existing biases in the datasets while other efforts \cite{datamining18}, \cite{datamining19}, \cite{datamining12}, \cite{datamining13}, \cite{datamining14}, \cite{datamining5}, \cite{datamining33} are focused on ensuring that data mining models do not produce discriminatory results even though the input data may be biased. Most of these works focus on the classical problem of classification. 
\citet{dataminingCriticisation} discuss the application of various classification methods like Support Vector Machines, Artificial Neural Networks, Bayesian classifiers, and decision trees in recommender systems. Their findings indicated that a more complex classifier need not give a better performance for recommender systems, and more exploration is needed in this direction. 

When considering "fairness for only users" according to the taxonomy presented by \citet{multisidedfairness}, \cite{fairrecsys} and \cite{metric} discuss the bias with respect to the preferential recommending of certain items only to the users of a specific gender. While weighted regularization matrix factorization studied in \cite{fairrecsys} is only appropriate for implicit feedback, the Group Utility Loss Minimization proposed in \cite{metric} works only with respect to the UserKNN algorithm. Both the papers address the issue of gender bias by employing post-processing algorithms that work only in limited settings. Though  \cite{fairrecsys} and \cite{metric} have addressed the issue of fairness of recommender systems with respect to gender, they have done so from the perspective of recommending certain items only to the users of a specific gender. The difference between their work and our study lies in the fact that we focus on the more direct issue of gender bias in recommendations shown to items associated with a specific gender.

\begin{figure}[]
\includegraphics[scale=0.8]{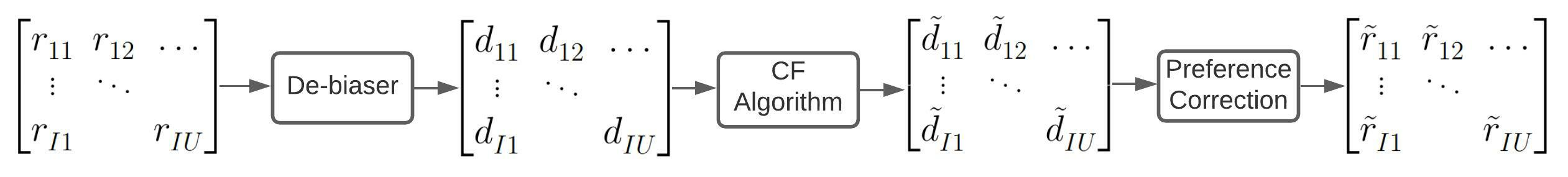}
\caption{Model schematics}
\label{deBiasingReBiasing}
\end{figure}



\citet{musicRecommender} in their research highlight the artist gender bias in music recommendations produced by Collaborative Filtering algorithms. 
The work traces the causes of disparity to variations in input gender distributions and user-item preferences, highlighting the effect such configurations can have on user's gender bias after recommendation generation. 
\citet{potentialFactors}  discuss the biases from the perspective of a specific group of individuals (for example, a particular gender) receiving less calibrated and hence unfair recommendations. 
\citet{ekstrand} explores the gender bias present in the book rating dataset. 
Our work is different from the works by \citet{musicRecommender}, \citet{potentialFactors} and \citet{ekstrand} in primarily two factors: (i) we consider explicit feedback as opposed to the implicit feedback, and (ii) we propose a principled approach to debias the ratings and theoretically show that the debiased ratings are unbiased estimators of true ratings.

The research by \citet{contentBasedFiltering} focuses on algorithmic gender bias and proposes a  framework whereby language-based data may be systematically evaluated to assess levels of gender bias prevalent in training data for machine learning systems. 
Our work is different from this study as this study is focused on evaluating gender bias in the language and textual data settings, while ours deals with gender bias in a more traditional user-item rating setting.
\section{The Model}
Consider a recommender system having $\mathcal{U} = \{1,2,\ldots,U\}$ users and $\mathcal{I} = \{1, 2, \ldots, I\}$ items. Let $\mathbb{D}$ and $\mathbb{A}$ denote the set of items associated with disadvantaged group and advantaged group, respectively. For example, in a book recommender system, the books represent the items; $\mathbb{D}$ and $\mathbb{A}$ represent the set of books written by women and men authors respectively.  With respect to book recommender system, researchers have already shown that the data is biased against female authors' books \cite{ekstrand}. 

Let $r_{ui} \in [1,R]$ denote the rating that user $u$ has given to the item $i$. As opposed to previous works, we consider explicit feedback wherein biases may not only arise from not giving the rating to the item but may also come from giving a bad rating to the item. The user profile $p_u = \{X_u, R_u\}$ represents the set of books ($X_u$) and the ratings ($R_u = \{r_{ui}\}_{i\in X_u}$) that user $u$ has given to those items. 

The proposed recommender system first pre-processes the data that: 1) finds the log-bias $\theta_u$ of each user $u$ and 2) generates the debiased rating $d_{ui}$ of each user $u$ and item $i$ using the computed bias in the first step. We then theoretically show that the debiased ratings generated are unbiased estimators of the true preferences of the user for the items rated by them. Thus, the debiased dataset can then be fed into various recommender algorithms to generate an unbiased predicted rating of a user $u$ for the item $i$, denoted by $\tilde{d}_{ui}$. This debiasing step ensures that the existing biases are not boosted further in the system. Our debiasing model is independent of any recommendation algorithm. We show the performance of our debiasing model on both K-nearest neighbors-based algorithms (UserKNN, ItemKNN) as well as matrix factorization-based algorithms (Alternating Least Square and  Singular Value Decomposition) to produce the recommendations. 

In the next step, we use preference corrector to reintroduce the preferences of a particular user $u$ to his/her own recommendations. This is achieved via producing a user specific rating $\tilde{r}_{ui}$ from the debiased rating $\tilde{d}_{ui}$. The recommendations are re-ranked according to the adjusted ratings, and the recommendations are presented to the user. This step ensures that the system does not lose accuracy for not considering the preferences of the users. Figure \ref{deBiasingReBiasing} shows the schematic diagram of our model. Consider that the ratings $r_{ui}$ are continuous values ranging from $1$ to $R$, then mathematically, a biased recommender system can be represented as follows:
\begin{enumerate}
    \item Each user $u$, while rating an item $i$, scales down the maximum rating $R$ by $e^{p_{ui}}$. $p_{ui}$ is a random variable, drawn from a distribution function $P_u(I)$, which has a mean value of $\alpha_u$. $p_{ui}$ represents the logarithm of the true preference of the user $u$ for the item $i$. For the sake of brevity, we call it log-preference of the user $u$ for the item $i$. Hence $e^{p_{ui}}$ is a representation of the true preference of user $u$ for the item $i$. 
    \item In case the item is associated with the disadvantaged group, the user $u$ further scales down the rating of the item by a factor $e^{q_{ui}}$. $q_{ui}$ is a random variable, drawn from a distribution function $Q_u(I)$ having a mean value of $\beta_u$. $q_{ui}$ represents the logarithm of the biasedness of the user $u$ shown to the item $i$. For the sake of brevity, we call it the log-bias of the user $u$ for the book $i$. Hence $e^{q_{ui}}$ represents the biasedness of the user $u$ for the book $i$.
    \item For each user $u$, $\beta_u$ is sampled from the a distribution function $\Omega(x)$ which governs the global log-bias tendency of the users. We denote the mean value of $\Omega(x)$ by $\gamma$.
\end{enumerate}
Thus, ratings $r_{ui}$ can be expressed as:

\begin{equation}
\label{ratingEquation}
  r_{ui} = 
  \begin{cases}
  R/e^{p_{ui}},& \textit{if $i$ is associated with advantaged group} \\
  R/e^{p_{ui}} e^{q_{ui}},& \textit{if $i$ is associated with disadvantaged group}
  \end{cases}
\end{equation}
We now present a detailed description of each of the step. 
\subsection{Estimating the mean value for log-bias}
\label{inputBias}
The geometric mean of the ratings given by a user $u$ to the items associated with disadvantaged and advantaged groups, denoted by $r_{ud}$ and $r_{ua}$ respectively, are given by the following expressions: 
\begin{align*}
r_{ud} = \left(\prod_{i \in \mathbb{D} \cap X_u}r_{ui}\right)^{1/|\mathbb{D} \cap X_u|}\ \text{ and }\ r_{ua} = \left(\prod_{i \in \mathbb{A} \cap X_u}r_{ui}\right)^{1/|\mathbb{A} \cap X_u|}
\end{align*}
Further, the log bias in the user profile $p_u$, is given by
    $\theta_u = \ln\left(\frac{r_{ua}}{r_{ud}}\right)$.

We use geometric mean to compute the average rating of a user due to the following reasons: 1) It is less biased towards very high scores as compared to arithmetic mean\cite{neve2019latent} and 2) when cold users are involved, aggregating recommendations using the geometric mean is more robust as compared to arithmetic mean\cite{valcarce2020assessing}. 

The below lemma shows that $\theta_u$ is an unbiased estimator of $\beta_u$.

 
 \begin{lemma}
 \label{lemma1}
The expectation of log-bias, $\theta_u$ in the user profile $p_u$ represents the mean value of the log-bias, $\beta_u$ of the user $u$.
 \end{lemma}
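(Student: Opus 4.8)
The plan is to take logarithms of the two geometric means $r_{ua}$ and $r_{ud}$, substitute the generative model of Equation~\eqref{ratingEquation}, and then take expectations term by term using linearity.

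First I would introduce the shorthand $n_a = |\mathbb{A}\cap X_u|$ and $n_d = |\mathbb{D}\cap X_u|$ for the numbers of advantaged- and disadvantaged-group items rated by $u$. Since the logarithm turns the products defining $r_{ua}$ and $r_{ud}$ into averages, and since $r_{ui} = R/e^{p_{ui}}$ for $i\in\mathbb{A}$ while $r_{ui} = R/(e^{p_{ui}}e^{q_{ui}})$ for $i\in\mathbb{D}$, I obtain $\ln r_{ua} = \ln R - \tfrac{1}{n_a}\sum_{i\in\mathbb{A}\cap X_u} p_{ui}$ and $\ln r_{ud} = \ln R - \tfrac{1}{n_d}\sum_{i\in\mathbb{D}\cap X_u}(p_{ui}+q_{ui})$. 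Subtracting, the $\ln R$ terms cancel, leaving $\theta_u = \tfrac{1}{n_d}\sum_{i\in\mathbb{D}\cap X_u}(p_{ui}+q_{ui}) - \tfrac{1}{n_a}\sum_{i\in\mathbb{A}\cap X_u}p_{ui}$.

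Next I would apply expectation and linearity, treating $X_u$ (hence $n_a,n_d$) and the group memberships as fixed so that the only randomness lies in the draws $p_{ui}\sim P_u(I)$ and $q_{ui}\sim Q_u(I)$. Each $\mathbb{E}[p_{ui}]$ equals $\alpha_u$ and each $\mathbb{E}[q_{ui}]$ equals $\beta_u$, so the two averages of $p_{ui}$ both collapse to $\alpha_u$ and cancel, yielding $\mathbb{E}[\theta_u] = (\alpha_u+\beta_u) - \alpha_u = \beta_u$.

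The only real content beyond this bookkeeping is the modeling assumption that the log-preference $p_{ui}$ has the same mean $\alpha_u$ regardless of whether $i\in\mathbb{A}$ or $i\in\mathbb{D}$; this is precisely what makes the preference terms cancel and isolates $\beta_u$, so I would state it explicitly rather than leave it implicit. A minor point to address is the degenerate case $n_a=0$ or $n_d=0$, in which $\theta_u$ is undefined: the claim is implicitly conditioned on $u$ having rated at least one item from each group, and under that conditioning the computation above goes through verbatim. I do not expect any genuine obstacle here — the statement is essentially a linearity-of-expectation identity once the model is written in log form.
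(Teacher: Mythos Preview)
Your proposal is correct and follows essentially the same route as the paper's own proof: take logs of the geometric means, substitute the rating model, subtract, and apply linearity of expectation so that the $\alpha_u$ terms cancel and only $\beta_u$ survives. Your version is slightly more careful in making the equal-mean assumption on $p_{ui}$ across groups explicit and in flagging the degenerate $n_a=0$ or $n_d=0$ case, neither of which the paper addresses.
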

 \begin{proof}
 Let us denote $m = |\mathbb{D} \cap X_u|$ and  $n = |\mathbb{A} \cap X_u|$ to be the number of items associated with disadvantaged and advantaged group respectively in user profile $p_u$. Then,
 \allowdisplaybreaks
 \begin{align*}
 \theta_u &= \ln\left(\frac{r_{ua}}{r_{ud}}\right) = \ln \left[ \frac{\left ( \prod_{y=1}^m e^{p_{uy}} e^{q_{uy}} \right)^{\frac{1}{m}}}{\left ( \prod_{x=1}^n e^ {p_{ux}} \right)^{\frac{1}{n}}} \right ]\tag*{(Using equation \ref{ratingEquation})}\\
    &=\frac{1}{m}\sum_{y=1}^m q_{uy} + \frac{1}{m}\sum_{y=1}^m p_{uy} - \frac{1}{n}\sum_{x=1}^n p_{ux}
\end{align*}
Taking expectation both sides:
\begin{equation}
\label{simplifiedRatingEquation}
    \mathbb{E}[\theta_u] = \mathbb{E}\left[\frac{1}{m}\sum_{y=1}^m q_{uy} + \frac{1}{m}\sum_{y=1}^m p_{uy} - \frac{1}{n}\sum_{x=1}^n p_{ux}\right] 
\end{equation}

Using linearity of expectation and some simplification, we get: 
\begin{align*}
    \mathbb{E}[\theta_u] &= \frac{1}{m}\sum_{y=1}^m \mathbb{E}[q_{uy}] + \frac{1}{m}\sum_{y=1}^m \mathbb{E}[p_{uy}] - \frac{1}{n}\sum_{x=1}^n \mathbb{E}[p_{ux}]\\
    &=\frac{1}{m}\sum_{y=1}^m \beta_u + \frac{1}{m}\sum_{y=1}^m \alpha_{u} - \frac{1}{n}\sum_{x=1}^n \alpha_u
\end{align*}
Thus, $\mathbb{E}[\theta_u] = \beta_u$.
 \end{proof}
Once we get the log biasedness tendencies of users, we use them to produce the debiased ratings for the given dataset. 
\subsection{Debiasing the Dataset}

The debiased rating of the item $i$ associated with disadvantaged group and rated by user $u$ is given as $d_{ui} = r_{ui}e^{\theta_u}$
We now provide the main theorem of our paper.
\begin{theorem}
$\ln(d_{ui})$ is the unbiased estimator of the log of the true rating of the item $i$.
\end{theorem}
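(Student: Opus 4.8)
The plan is to unwind the definition $d_{ui} = r_{ui}e^{\theta_u}$, substitute the generative model for $r_{ui}$ from Equation~\ref{ratingEquation}, take logarithms, and then apply linearity of expectation together with Lemma~\ref{lemma1}. The whole argument is a short first-moment computation once the right quantities are identified.

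First I would pin down what the ``true rating'' of item $i$ for user $u$ means: it is the rating that would have been reported in the absence of the bias multiplier, namely $R/e^{p_{ui}}$, whose logarithm is $\ln R - p_{ui}$. Since debiasing is only applied to items in the disadvantaged group, I take $i \in \mathbb{D} \cap X_u$ (for $i \in \mathbb{A}$ there is no bias term and the claim holds trivially with $d_{ui} = r_{ui}$). Using the second case of Equation~\ref{ratingEquation},
\[
\ln(d_{ui}) = \ln\!\left(\frac{R}{e^{p_{ui}}e^{q_{ui}}}\right) + \theta_u = \ln R - p_{ui} - q_{ui} + \theta_u.
\]

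Next I would take expectations on both sides. By linearity of expectation, $\mathbb{E}[\ln(d_{ui})] = \ln R - \mathbb{E}[p_{ui}] - \mathbb{E}[q_{ui}] + \mathbb{E}[\theta_u]$. By the model assumptions, $\mathbb{E}[p_{ui}] = \alpha_u$ (the mean of $P_u(I)$) and $\mathbb{E}[q_{ui}] = \beta_u$ (the mean of $Q_u(I)$), and by Lemma~\ref{lemma1}, $\mathbb{E}[\theta_u] = \beta_u$. Substituting, the two $\beta_u$ terms cancel, giving $\mathbb{E}[\ln(d_{ui})] = \ln R - \alpha_u = \mathbb{E}[\ln R - p_{ui}]$, which is exactly the expected log of the true rating. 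Hence $\ln(d_{ui})$ is an unbiased estimator of the log of the true rating of item $i$.

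The calculation itself is routine, so the only points requiring care are conceptual rather than technical: (i) fixing the definition of the ``true rating'' as the bias-free quantity $R/e^{p_{ui}}$ so that the target of the estimator is unambiguous, and (ii) observing that $\theta_u$ is \emph{not} independent of $q_{ui}$ — indeed $\theta_u$ contains the term $\tfrac{1}{m}q_{ui}$ — but this dependence is harmless because the argument invokes only linearity of expectation, which does not require independence. I expect this second observation to be the subtlest part to state cleanly, since a reader might worry about the self-reference of $\theta_u$ to item $i$.
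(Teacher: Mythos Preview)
Your proof is correct and follows essentially the same route as the paper: expand $\ln(d_{ui})$ via $d_{ui}=r_{ui}e^{\theta_u}$ and Equation~\ref{ratingEquation}, take expectations, and use Lemma~\ref{lemma1} so that the $\beta_u$ terms cancel, leaving $\ln R - \alpha_u$. Your additional remarks---making explicit what ``true rating'' means, handling the $i\in\mathbb{A}$ case, and noting that the dependence between $\theta_u$ and $q_{ui}$ is harmless under linearity of expectation---are welcome clarifications that the paper's proof leaves implicit.
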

\begin{proof} $\ln(d_{ui}) = \theta_u + \ln(r_{ui}) =\theta_u + \ln R - p_{ui} - q_{ui}$. Last equality is obtained from Equation \ref{ratingEquation}. 
Taking expectation both sides:
\allowdisplaybreaks
\begin{align*}
    \mathbb{E}(\ln(d_{ui})) &= \mathbb{E}[\theta_u]+\mathbb{E}[\ln R] - \mathbb{E}[q_{ui}] - \mathbb{E}[p_{ui}]\\
    &=\beta_u + \ln R - \beta_u - \alpha_u \tag*{(Using Lemma \ref{lemma1})}\\
    &= \ln R - \alpha_u = \ln \left( \frac{R}{e^{\alpha_u}} \right) 
\end{align*}
 
As we can see, the expected value of $\ln(d_{ui})$ contains only the term representing the true preference of the item for user $u$. 
\end{proof}
Thus, instead of $r_{ui}$, ratings $d_{ui}$ are fed into the recommender system to generate the predicted unbiased ratings $\tilde{d}_{ui}$. Simply removing the bias from the user's rating could severely affect the system's accuracy because the bias of an individual user reflects their taste. However, the debiasing step helps prevent the bias of one user from affecting the recommendation of other users. Next, we use preference corrections by correcting the predicted rating of the user with respect to his/her own preference parameter. 
\subsection{Preference Correction to Improve Accuracy}
\label{reintroducingBias}

Note that when the users are inherently biased against a group of items, $\mathcal{D}$ then showing the items from $\mathcal{D}$ naively to these users will severely affect the accuracy of the system. The goal of this work is not just to promote the exposure of the items among the two groups but is to not let the bias of one user creep into the bias of the other user. This was achieved via debiasing the dataset. Once the debiased ratings are generated, the accuracy of the system is maintained by introducing a correction factor. Although providing us with higher accuracy, the idea to re-introduce the correction factor may lead to an overall increase in the individual biases. This on a prima-facie may look self-defeating, but we need to note that final ratings still have significantly less bias than original ratings. If we do not introduce the correction factor, the users might flock to a substantial bias platform due to poor accuracy.

The correction is achieved via multiplying the predicted ratings of items associated with disadvantaged group by a factor $e^{-\theta_{u}}$. Thus, the final recommended ratings will be given as $\tilde{r}_{ui} = \tilde{d}_{ui}e^{-\theta_{u}}$. 
Similar to the calculation of bias in the dataset, we can now compute the bias in the recommendation profile.

\subsection{Bias in recommendation profile}
\label{outputBias}

We generate recommendations for the users in the test set $\mathcal{T}$. The recommendation profile for a user $u \in \mathcal{T}$ is denoted by $\tilde{p}_u = \{\tilde{X}_u, \tilde{R}_u\}$, which represents the set of recommended books ($\tilde{X}_u$) for the user $u$ and their predicted ratings ($\tilde{R}_u = \{\tilde{r}_{ui}\}_{i\in \tilde{X}_u}$).
Let the set of items associated with disadvantaged and advantaged groups be denoted by $\tilde{\mathbb{D}}$ and $\tilde{\mathbb{A}}$ respectively. The average predicted ratings of the items associated with disadvantaged and advantaged groups, denoted by $\tilde{r}_{ud}$ and $\tilde{r}_{ua}$ respectively, are given by:
$\tilde{r}_{ud} = \left(\prod_{i \in \tilde{\mathbb{D}} \cap \tilde{X_u}}\tilde{r}_{ui}\right)^{1/|\tilde{\mathbb{D}} \cap \tilde{X}_u|}\ \text{ and }\
\tilde{r}_{ua} = \left(\prod_{i \in \tilde{\mathbb{A}} \cap \tilde{X_u}}\tilde{r}_{ui}\right)^{1/|\tilde{\mathbb{A}} \cap \tilde{X}_u|}$
where $\tilde{r}_{ui}$ is the predicted rating given to item $i$ in the recommendation-profile generated for a user $u$. The log-bias in the recommendation-profile $p_u$, denoted by $\tilde{\theta}_u$, is then given by 
    $\tilde{\theta}_u = \ln\left(\frac{\tilde{r}_{ua}}{\tilde{r}_{ud}}\right)$.
For an unbiased recommendation-profile, $\tilde{\theta}_u = 0$. A profile biased against disadvantaged groups will have $\tilde{\theta}_u > 0$. We can then compute the overall bias of the recommender system by taking the average overall users, and this average gives us the estimated value of $\gamma$. 




\section{Dataset}
To evaluate the proposed model, we run experiments on two publicly available book rating datasets, the Book-Crossing dataset, originally put together by \citet{bookCrossing} and the Amazon Book Review dataset, put together by \citet{amazon}. We further process this dataset through the following stages: 
\subsection{Book Author Identification}
Their unique ISBNs identify the books in both datasets. We identified the authors of the books present in the datasets via their ISBN numbers using the following three API services: \emph{Google Books API} \cite{googleAPI}, \emph{ISBNdb API} \cite{isbndbAPI}, and \emph{Open Library API} \cite{openLibraryAPI}.
We could not identify the authors of some of the books. Hence we discarded those books from the dataset.
\subsection{Author Gender Identification}
We identified the genders of the authors via their first names. We used \emph{ Genderize.io } \cite{genderize}, an API service dedicated to identifying the gender given the first name of the person. We used a minimum confidence threshold of $90\%$ for gender identification. We could not identify the gender of some of the authors. We discard the books written by those authors from the dataset.

\subsection{Filtering}
We filtered the Book-Crossing dataset to include only those books with at least $50$ ratings and only those users who have rated at least $50$ books. Amazon dataset was significantly larger as compared to the Book-Crossing dataset. We filtered it to include only those books with at least $100$ ratings and only those users who have rated at least $100$ books. We did this filtering so that recommender algorithms have much data to produce accurate recommendations. The statistics of filtered datasets are mentioned in Table \ref{tab:datasetStats}. The number of books written by male authors is almost equal to that of female authors for both datasets.

\begin{table}[]
\centering
\begin{tabular}{|c|c|c|}
\cline{1-3}
Statistic & Amazon & Book-Crossing \\ \cline{1-3}
Number of male authored books & 58369 & 829\\ \cline{1-3}
Number of female authored books & 58220 & 806 \\ \cline{1-3}
Number of users & 44792 & 376\\ \cline{1-3}
\end{tabular}
\caption{Dataset details}
\label{tab:datasetStats}
\end{table}

\section{Experimental Results}
\subsection{Input Bias}

We show the distributions of log-bias tendency ($\theta_u$) of the users in the Amazon dataset and the Book-Crossing dataset in Figure \ref{fig:inputLogBias}. We observe that the mean log-bias tendency over all the users in the Amazon dataset is higher (0.176) than that of the Book-Crossing dataset (0.157)\footnote{code is available at \url{https://github.com/venomNomNom/genderBias.git} }
. 


\begin{figure}[]
\begin{subfigure}{.5\textwidth}
  \centering
  \includegraphics[scale=0.5]{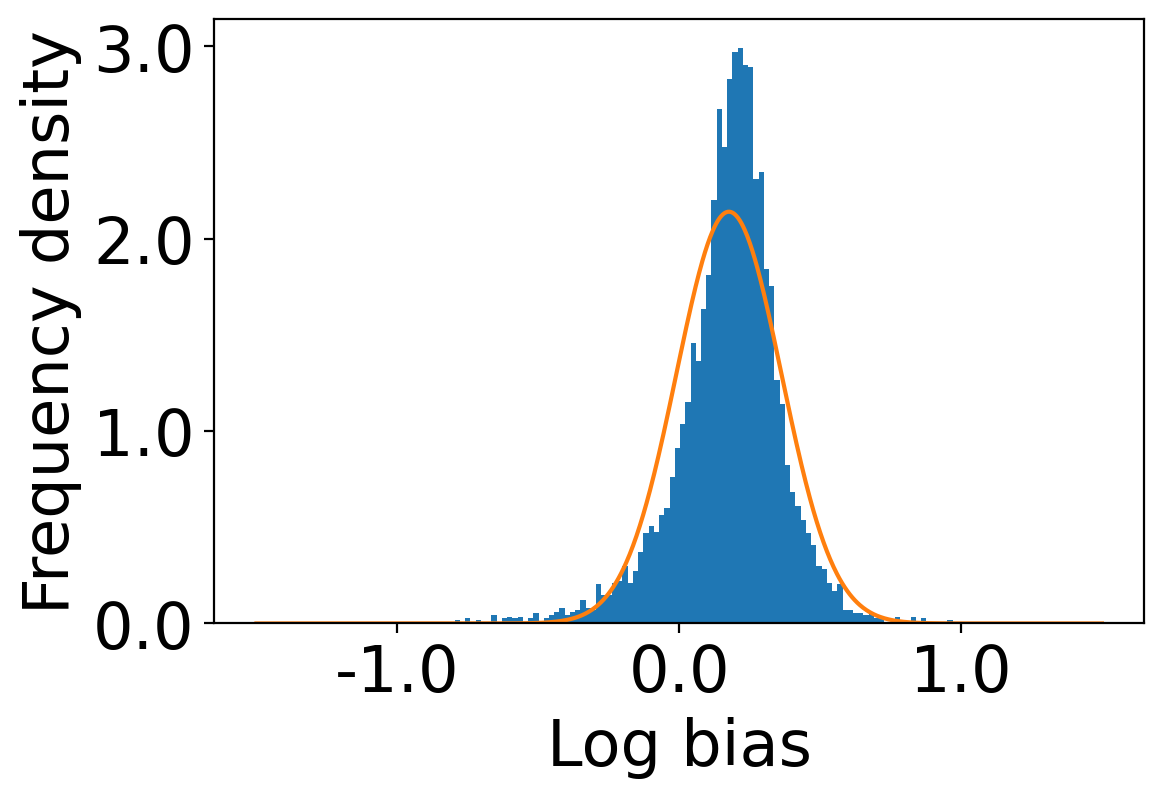}  
  \caption{Amazon dataset}
  \label{fig:inputBiasAZ}
\end{subfigure}
\begin{subfigure}{.5\textwidth}
  \centering
  \includegraphics[scale=0.5]{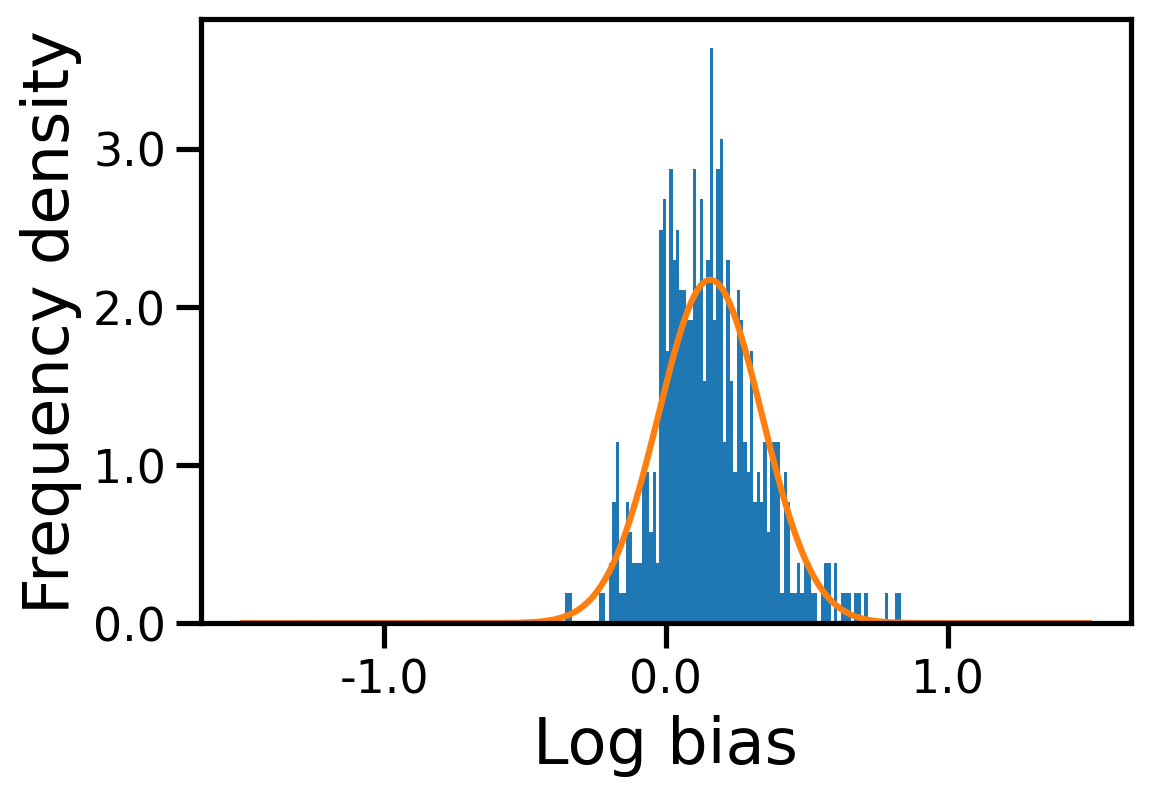}  
  \caption{Book-Crossing dataset}
  \label{fig:inputBiasBX}
\end{subfigure}
\caption{User log-bias in the original dataset}
\label{fig:inputLogBias}
\end{figure}



\subsection{Output Bias}
We randomly separate $20\%$ of users in each dataset as the test group. We generate the recommendations for the users in the test group using two K-nearest neighbors-based algorithms, UserKNN and ItemKNN, and two matrix factorization-based algorithms, Alternating Least Square and Singular Value Decomposition. These algorithms were selected because the accuracy and ranking relevancy of the recommendations produced by them were among the highest values compared with other algorithms. Hence coupling our model with them would best highlight the effects brought about by the same. We calculate the estimated value of log-bias ($\tilde{\theta}_u$) and accuracy in the recommendations separately for each algorithm applied on the two datasets. For this, we use two error measures, the Root Mean Squared Error (RMSE) and the Mean Absolute Error (MAE), and two ranking relevance parameters, Normalized Discounted Cumulative Gains and Mean Reciprocal Rank.

We first begin plotting the log-bias ($\tilde{\theta}_u$) distribution for the recommendations produced by the algorithms without employing our debiased model in Figures  \ref{fig:outputLogBiasWithoutModelAZ} and \ref{fig:outputLogBiasWithoutModelBX} for Amazon and Book-Crossing datasets respectively. We compute the log-bias by feeding biased ratings $r_{ui}$ to the four algorithms. Table \ref{tab:azResults} shows the exact values for comparison with other cases. We next deploy our model partially. We leave out the preference correction phase and produce the recommendations using the algorithms mentioned before by feeding the debiased ratings $d_{ui}$ to these algorithms. We estimate the mean log-bias tendency in the recommendations $\tilde{\theta}_u$ using debiased ratings produced by the algorithms $\tilde{d}_{ui}$. The log-bias ($\tilde{\theta_u}$) distribution for the recommendations produced by the algorithms for the Amazon as well as Book-Crossing dataset after partial deployment of the model is depicted in the Figure \ref{fig:outputLogBiasWithoutAccuracyAZ} and Figure \ref{fig:outputLogBiasWithoutAccuracyBX} with the exact values provided in the Table \ref{tab:azResults}. As can be seen, there is a significant reduction in log-bias tendency ($64.38 \%$) in the Amazon dataset and ($53.67 \%$) in Book-Crossing dataset for the UserKNN algorithm. However, we also see an increase in error rates on both datasets. This is because the test data itself contains biases. 

Finally, we deploy our complete model and repeat the experiment. The log-bias ($\tilde{\theta_u}$) distribution for the recommendations produced by the algorithms after deployment of the complete model is depicted in Figures \ref{fig:outputLogBiasWithAccuracyAZ} and \ref{fig:outputLogBiasWithAccuracyBX} with the values in the Table \ref{tab:azResults}. As can be seen, there is still a significant reduction in mean log-bias tendency, which reduces by $42.39\%$ in the Amazon dataset and by $37.82\%$ in the case of the Book-Crossing dataset for UserKNN algorithm. The accuracy loss, however, is insignificant, making this trade-off advantageous. Figure \ref{fig:biasReduction} presents the percentage gain in bias reduction for both the dataset. The percentage loss in accuracy is depicted in figures \ref{fig:accuracyLossAZ} and \ref{fig:accuracyLossBX} for Amazon and Book-Crossing datasets respectively. The percentage loss in ranking relevancy metrics are depicted in figures \ref{fig:rankingRelevancyLossAZ} and \ref{fig:rankingRelevancyLossBX} respectively.


\begin{figure}[]
\begin{subfigure}{.24\textwidth}
  \centering
  \includegraphics[scale=0.25]{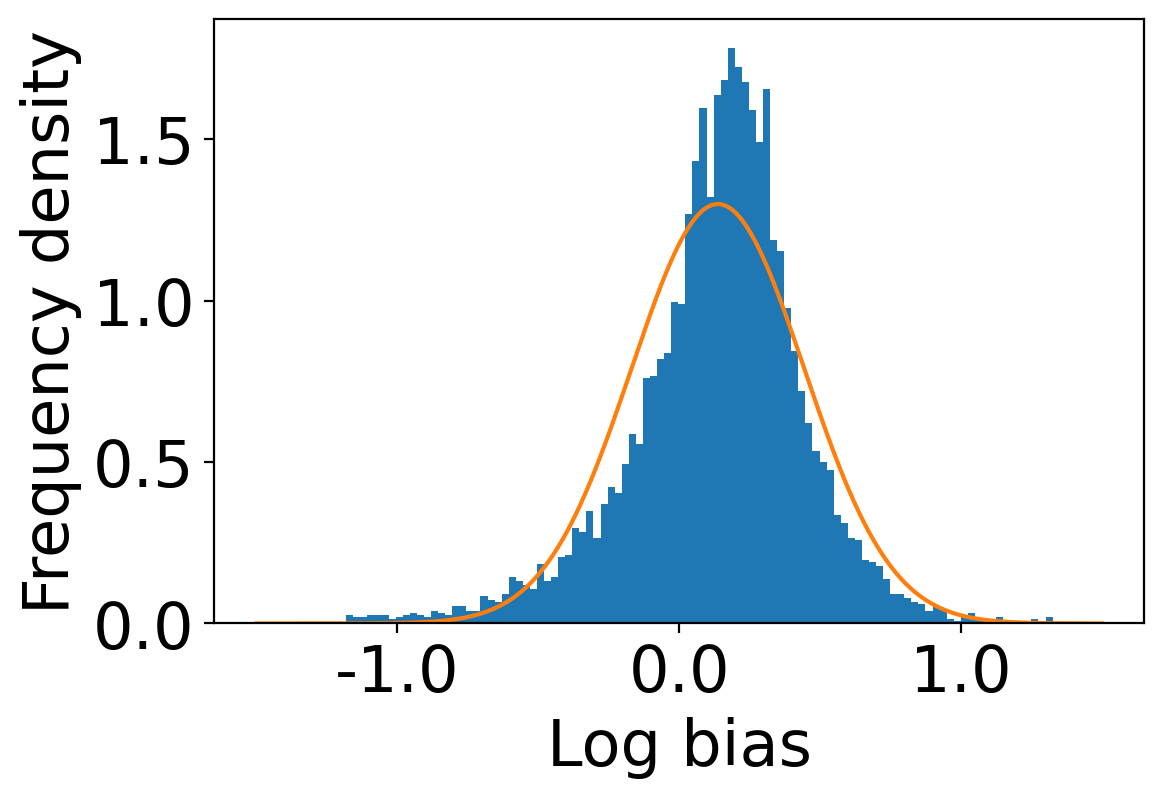}
  \caption{UserKNN}
  \label{fig:outputBiasWithoutModelUserKNNAZ}
\end{subfigure}
\begin{subfigure}{.24\textwidth}
  \centering
  \includegraphics[scale=0.25]{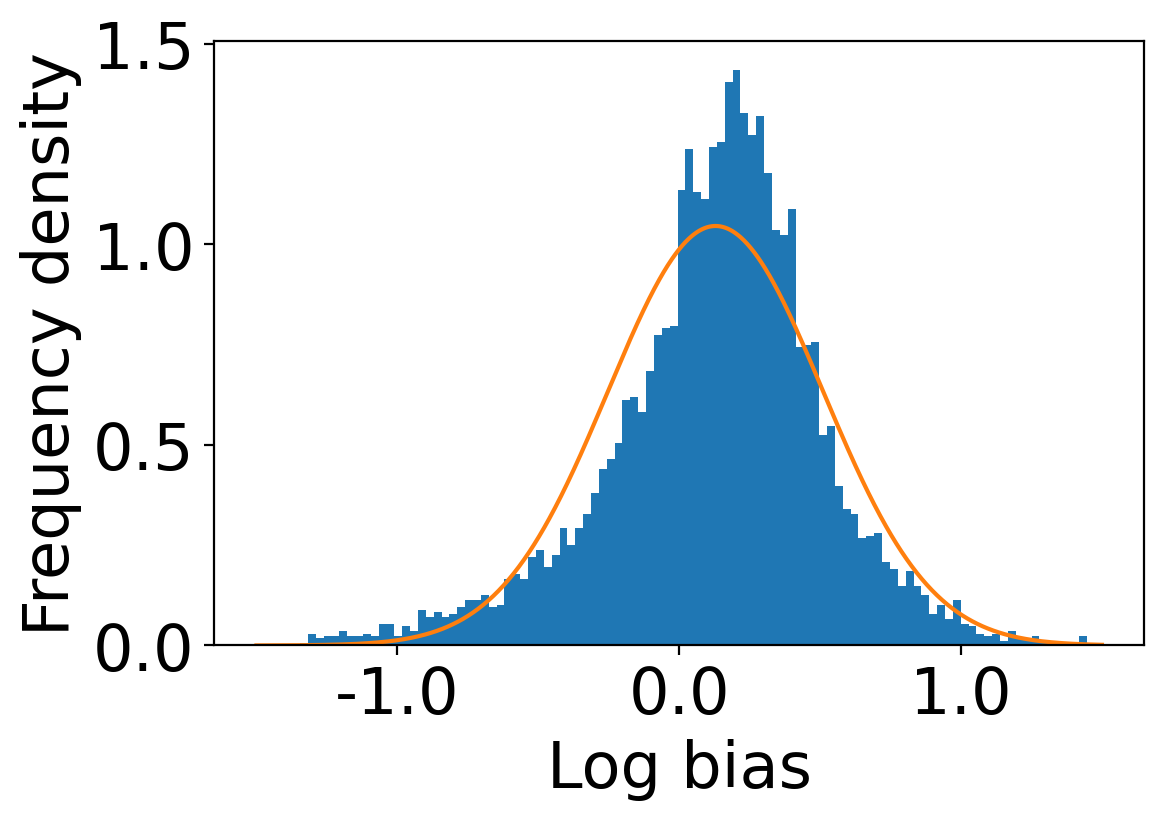}
  \caption{ItemKNN}
  \label{fig:outputBiasWithoutModelItemKNNAZ}
\end{subfigure}
\begin{subfigure}{.24\textwidth}
  \centering
  \includegraphics[scale=0.25]{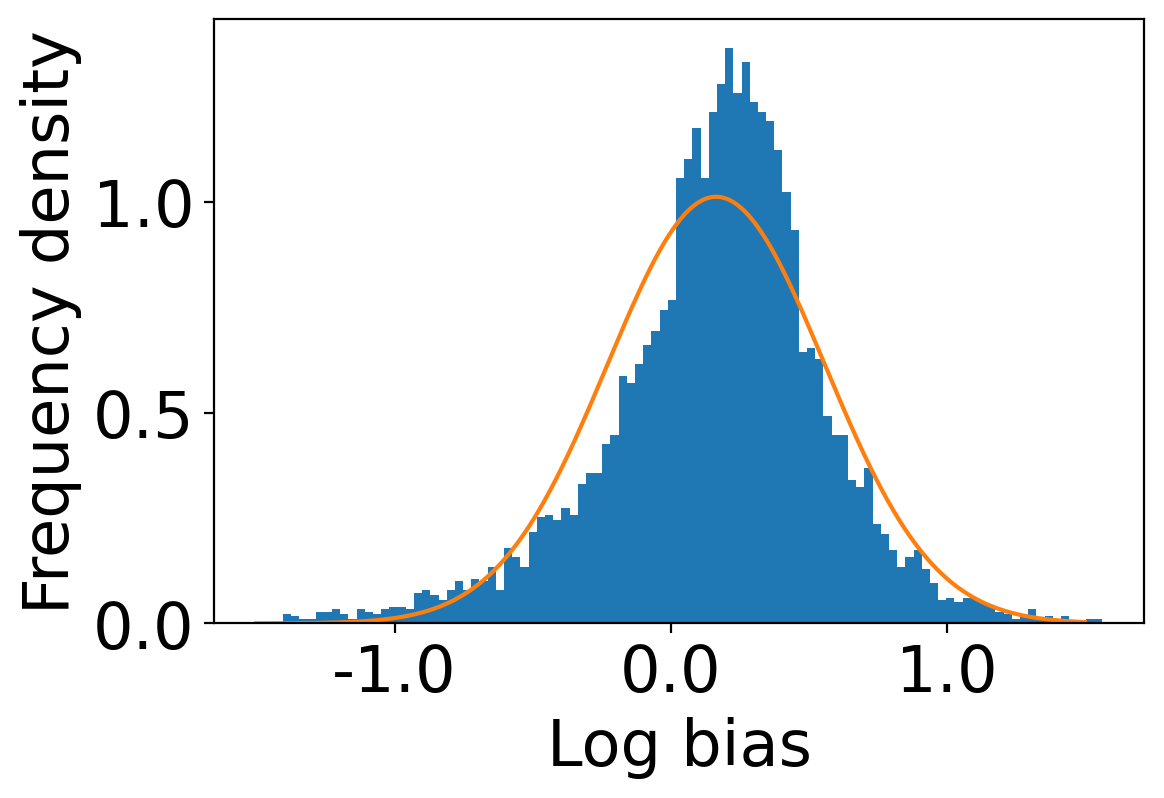}
  \caption{ALS}
  \label{fig:outputBiasWithoutModelAlsAZ}
\end{subfigure}
\begin{subfigure}{.24\textwidth}
  \centering
  \includegraphics[scale=0.25]{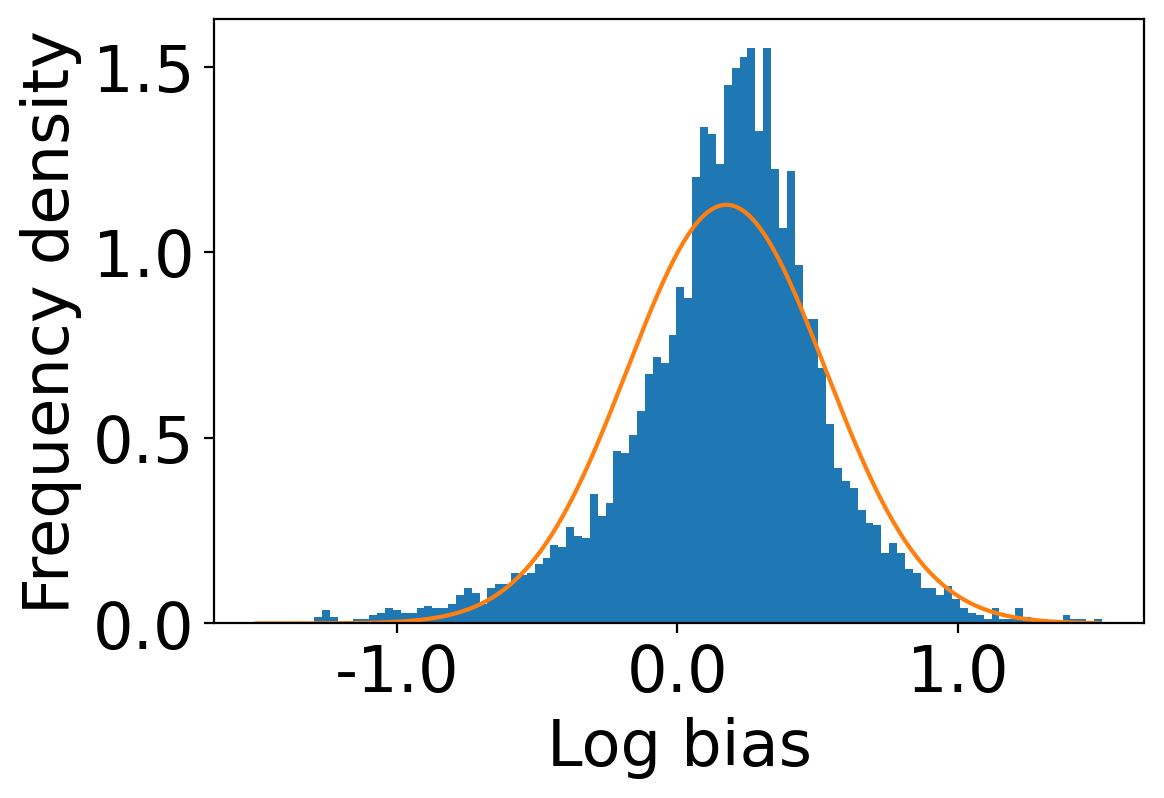}
  \caption{SVD}
  \label{fig:outputBiasWithoutModelSvdAZ}
\end{subfigure}
\caption{Output log-bias in AZ dataset without employing the model}
\label{fig:outputLogBiasWithoutModelAZ}
\end{figure}


\begin{figure*}[ht!]
\begin{subfigure}{.24\textwidth}
  \centering
  \includegraphics[scale=0.25]{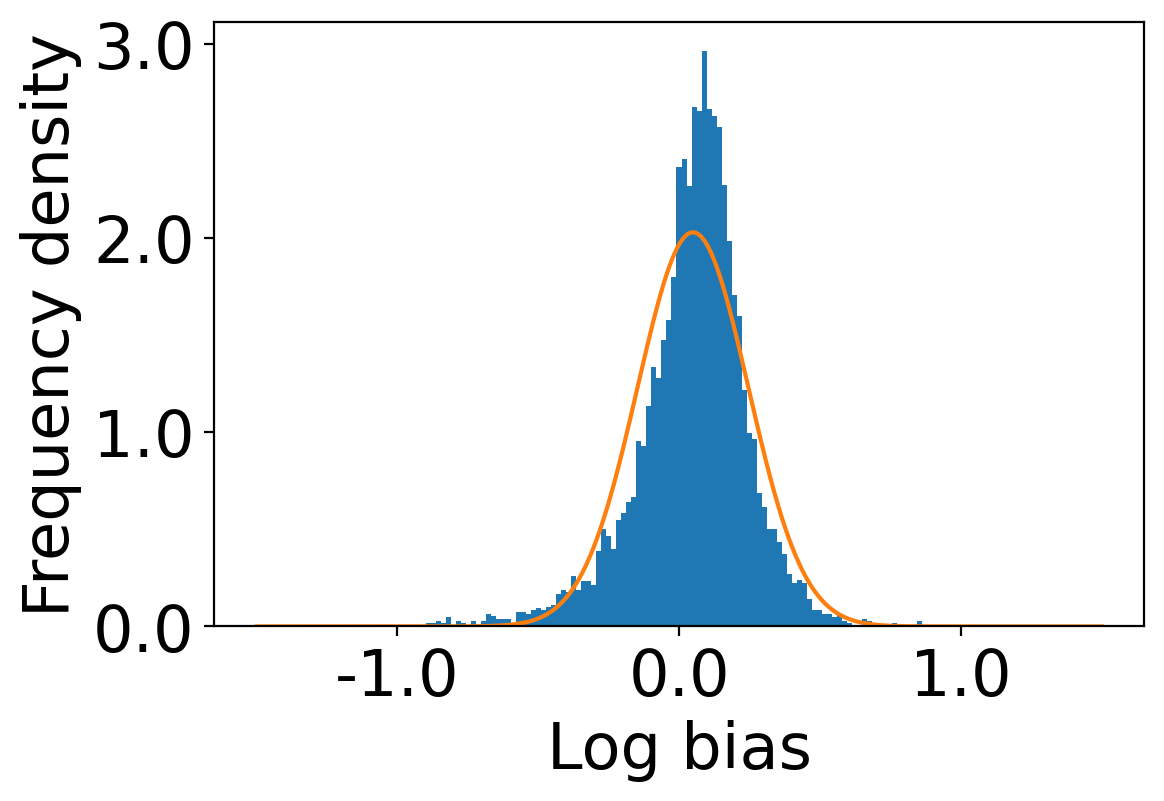}
  \caption{UserKNN}
  \label{fig:outputBiasWithoutAccuracyUserKNNAZ}
\end{subfigure}
\begin{subfigure}{.24\textwidth}
  \centering
  \includegraphics[scale=0.25]{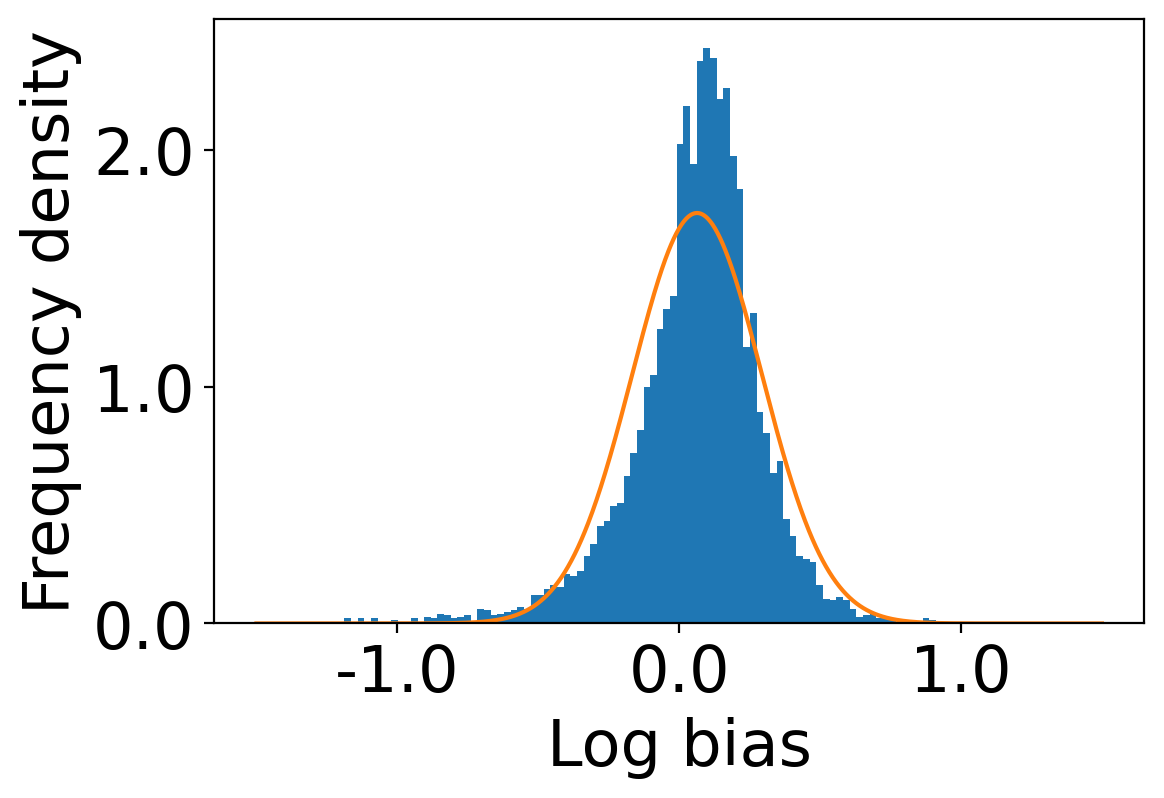}
  \caption{ItemKNN}
  \label{fig:outputBiasWithoutAccuracyItemKNNAZ}
\end{subfigure}
\begin{subfigure}{.24\textwidth}
  \centering
  \includegraphics[scale=0.25]{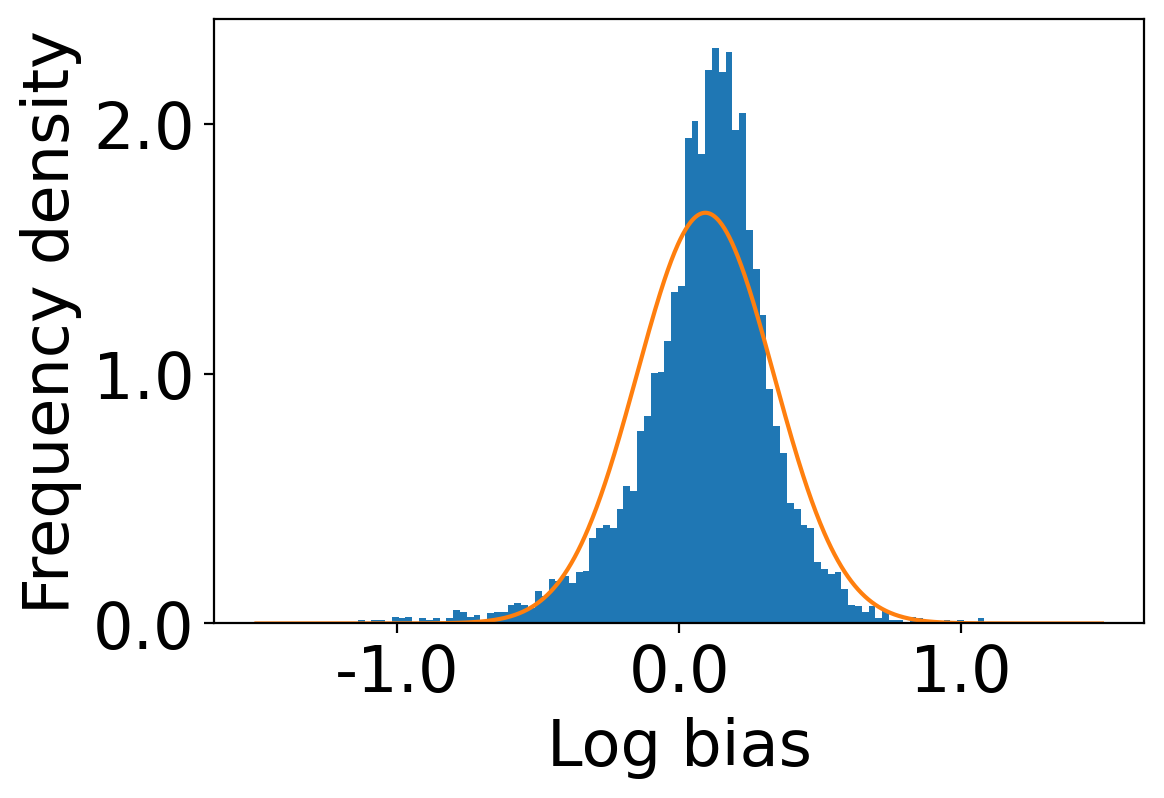}
  \caption{ALS}
  \label{fig:outputBiasWithoutAccuracyAlsAZ}
\end{subfigure}
\begin{subfigure}{.24\textwidth}
  \centering
  \includegraphics[scale=0.25]{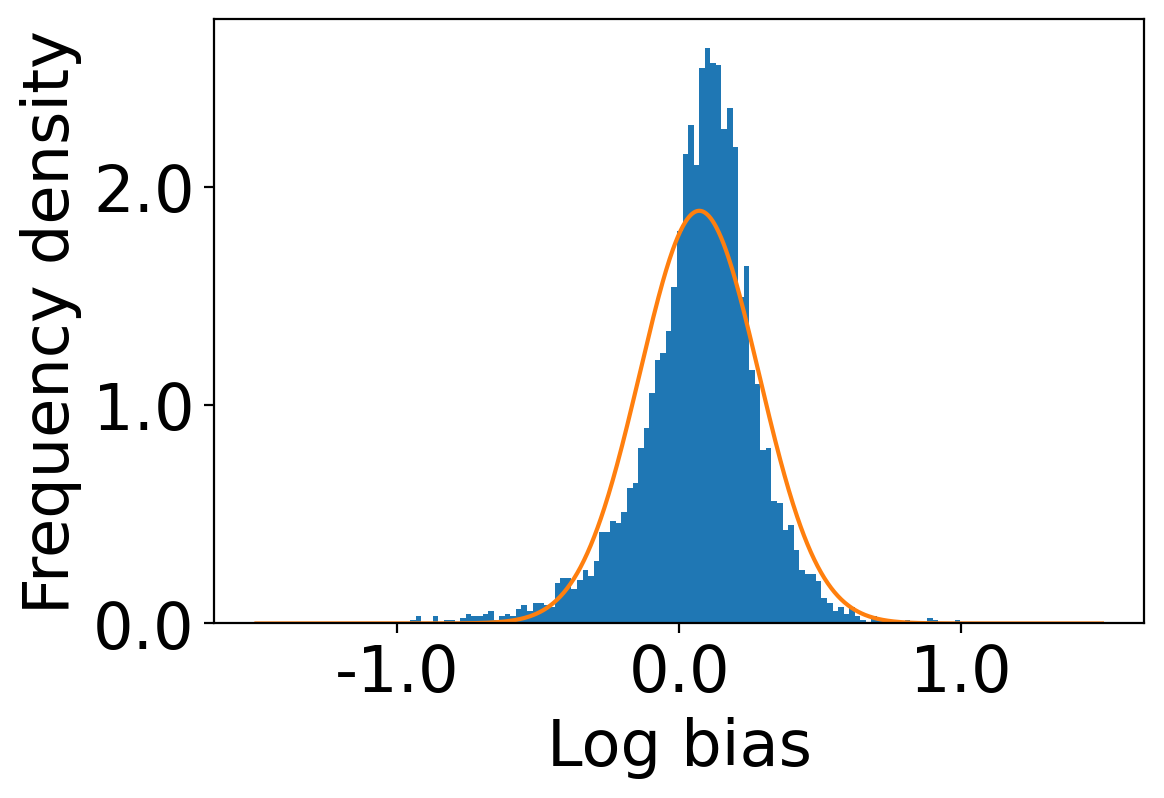}
  \caption{SVD}
  \label{fig:outputBiasWithoutAccuracySvdAZ}
\end{subfigure}
\caption{Output log-bias in AZ dataset with debiasing}
\label{fig:outputLogBiasWithoutAccuracyAZ}
\end{figure*}


\begin{figure*}[]
\begin{subfigure}{.24\textwidth}
  \centering
  \includegraphics[scale=0.25]{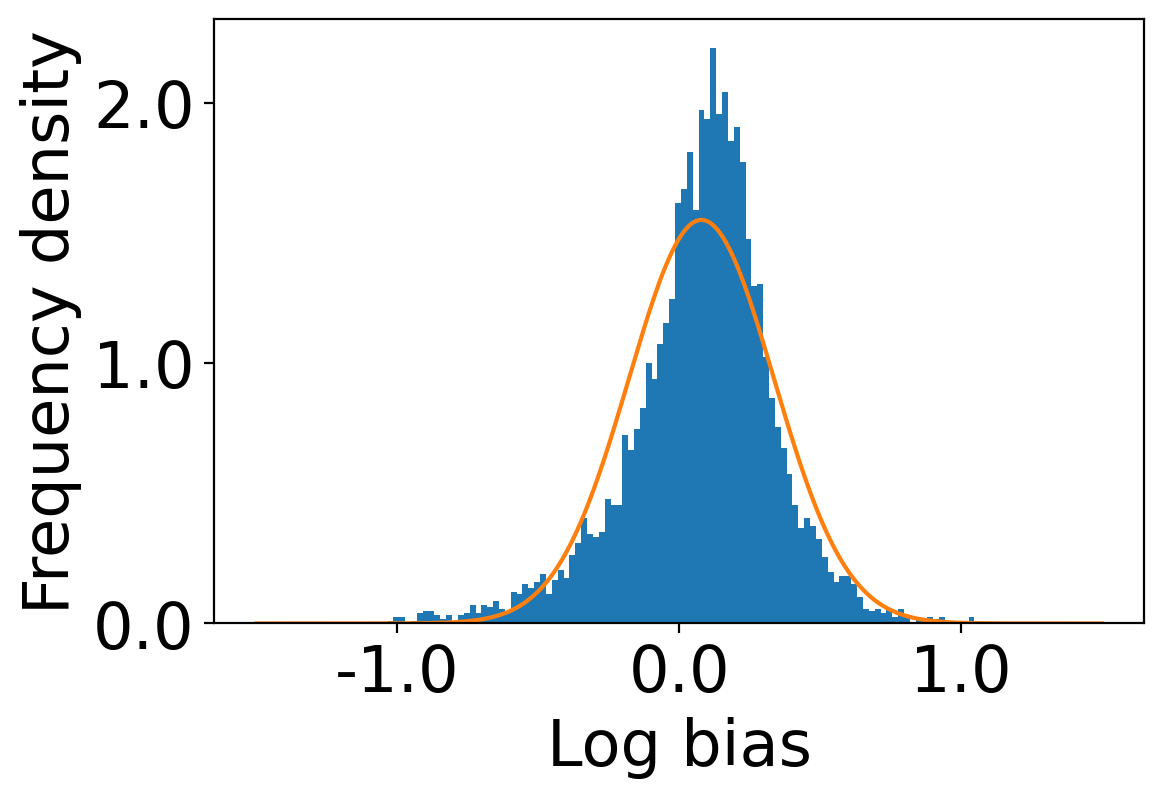}
  \caption{UserKNN}
  \label{fig:outputBiasWithAccuracyUserKNNAZ}
\end{subfigure}
\begin{subfigure}{.24\textwidth}
  \centering
  \includegraphics[scale=0.25]{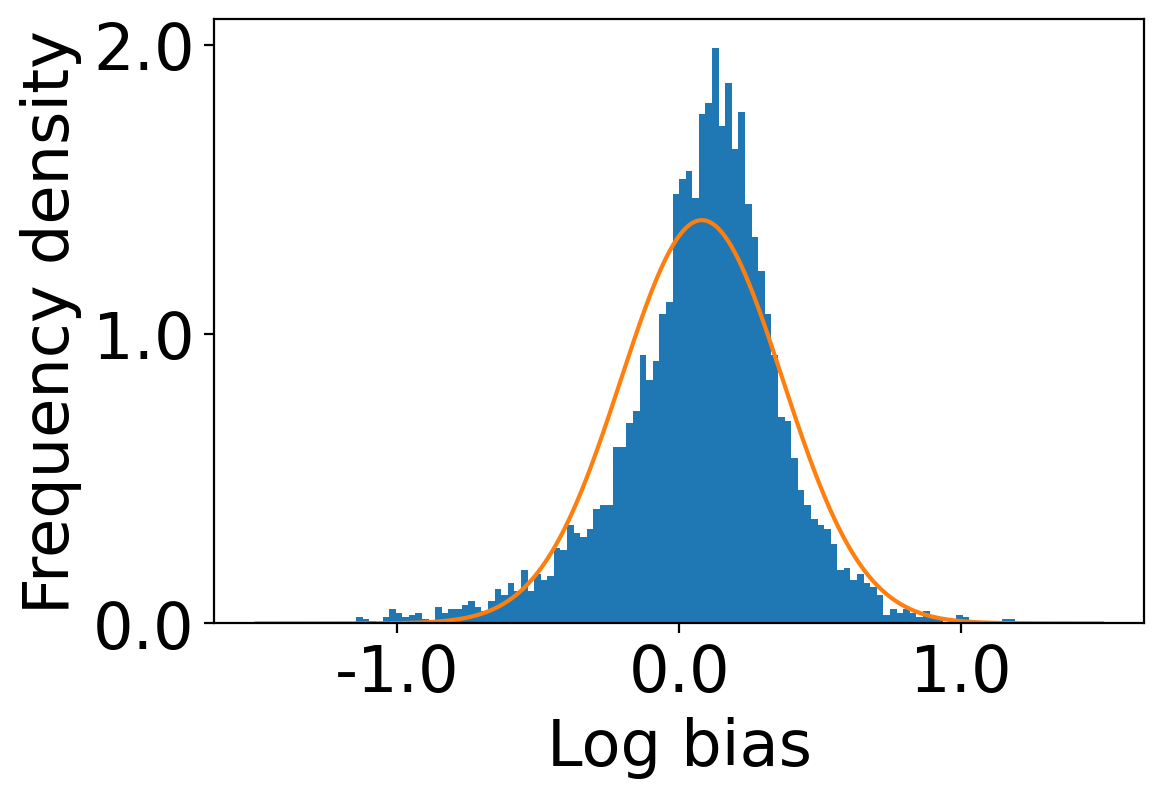}
  \caption{ItemKNN}
  \label{fig:outputBiasWithAccuracyItemKNNAZ}
\end{subfigure}
\begin{subfigure}{.24\textwidth}
  \centering
  \includegraphics[scale=0.25]{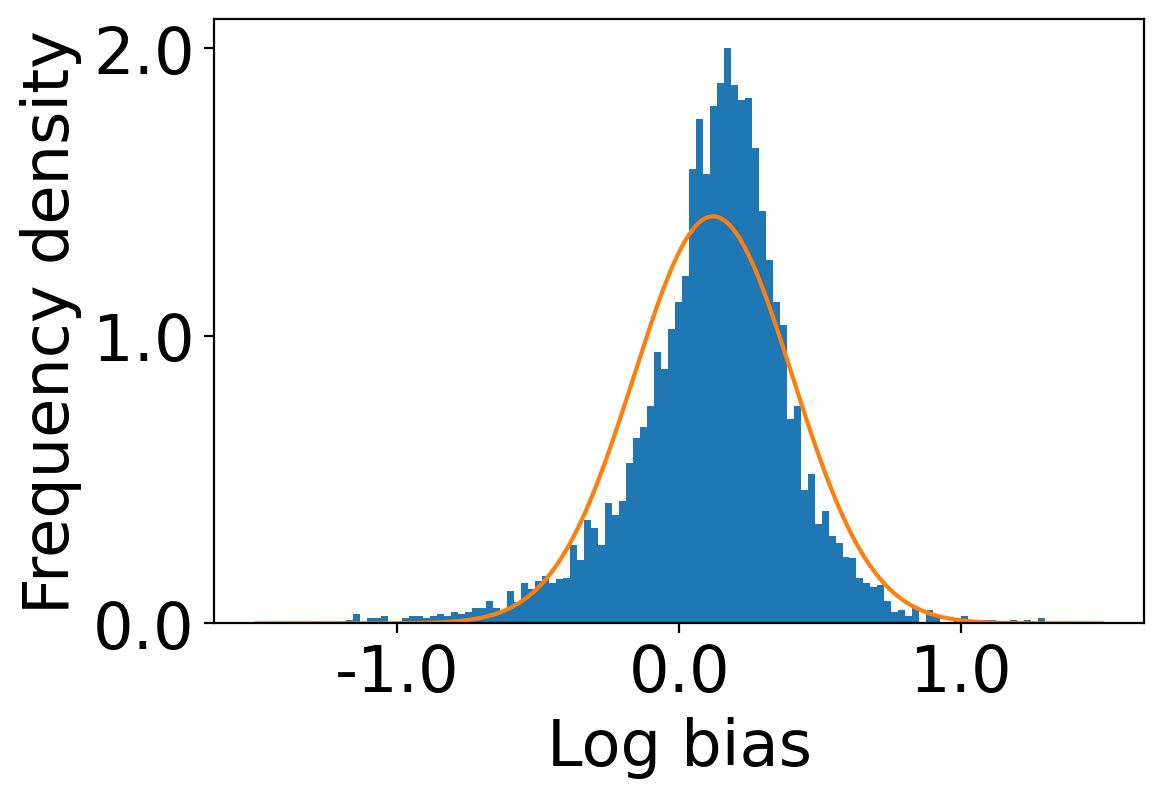}
  \caption{ALS}
  \label{fig:outputBiasWithAccuracyAlsAZ}
\end{subfigure}
\begin{subfigure}{.24\textwidth}
  \centering
  \includegraphics[scale=0.25]{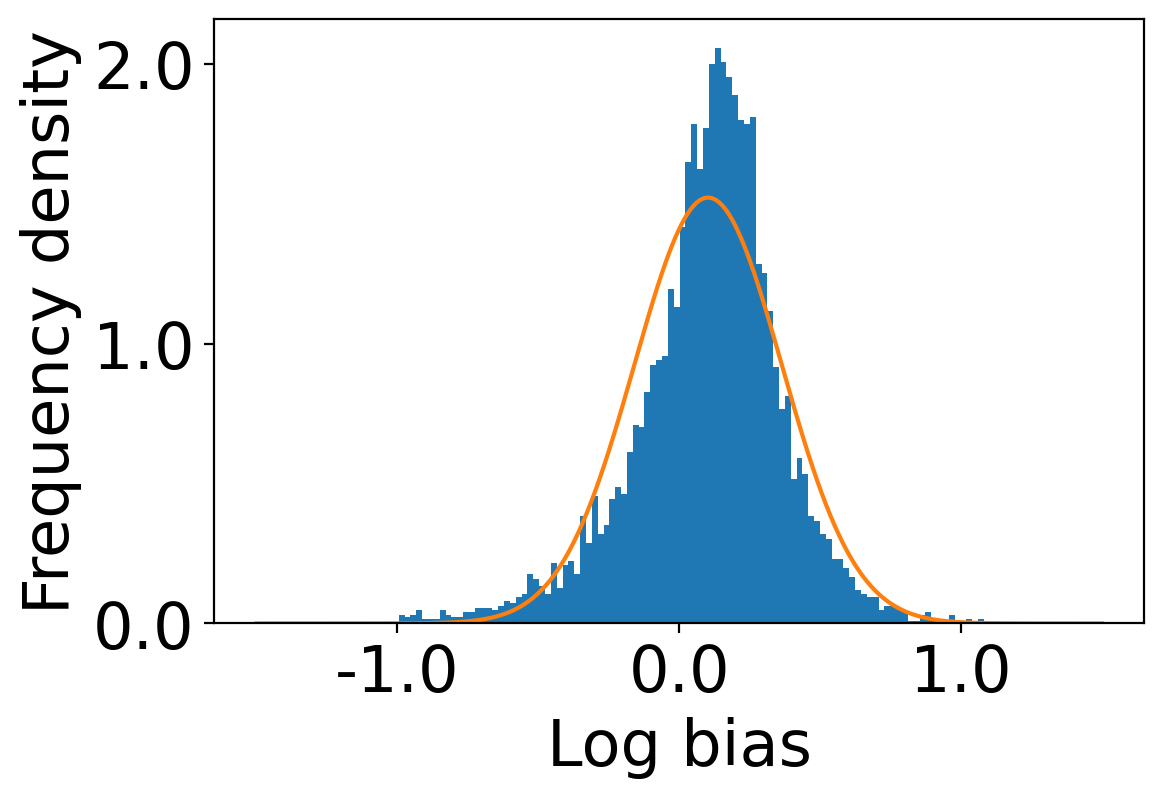}
  \caption{SVD}
  \label{fig:outputBiasWithAccuracySvdAZ}
\end{subfigure}
\caption{Output log-bias in AZ dataset with reinserting the biases}
\label{fig:outputLogBiasWithAccuracyAZ}
\end{figure*}


\begin{figure*}[]
\begin{subfigure}{.24\textwidth}
  \centering
  \includegraphics[scale=0.25]{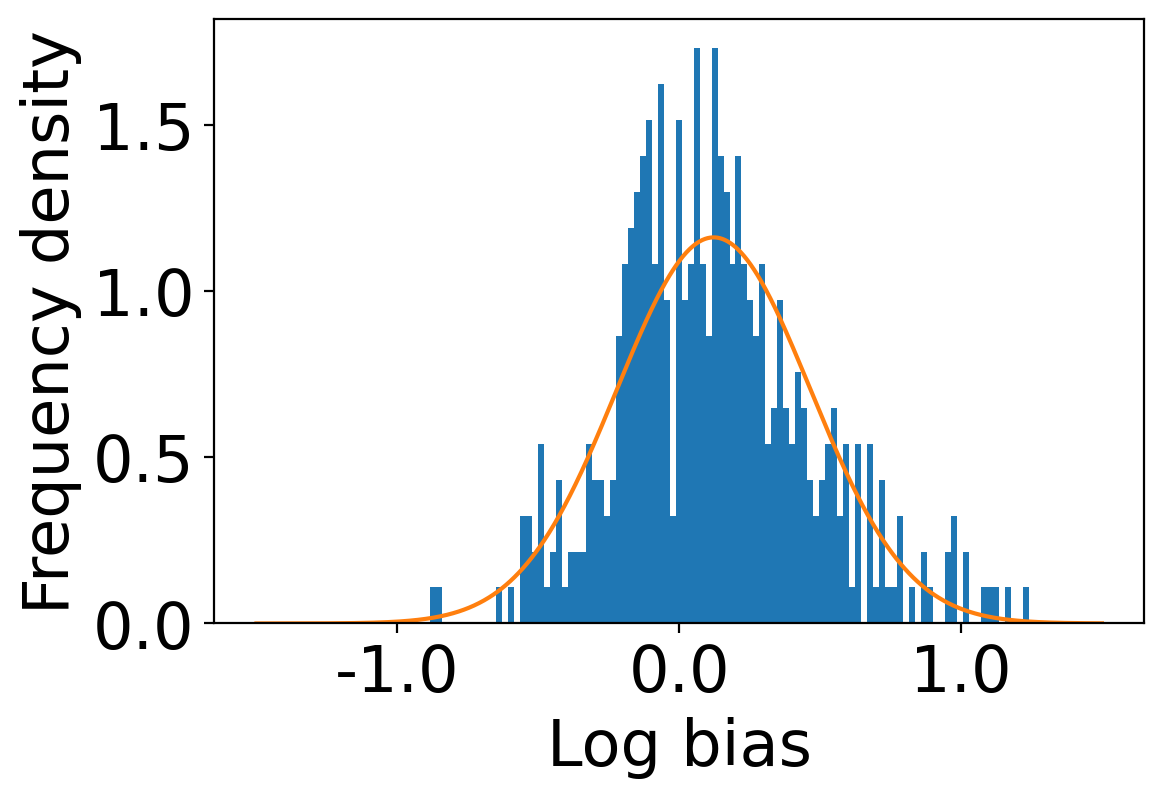}
  \caption{UserKNN}
  \label{fig:outputBiasWithoutModelUserKNNBX}
\end{subfigure}
\begin{subfigure}{.24\textwidth}
  \centering
  \includegraphics[scale=0.25]{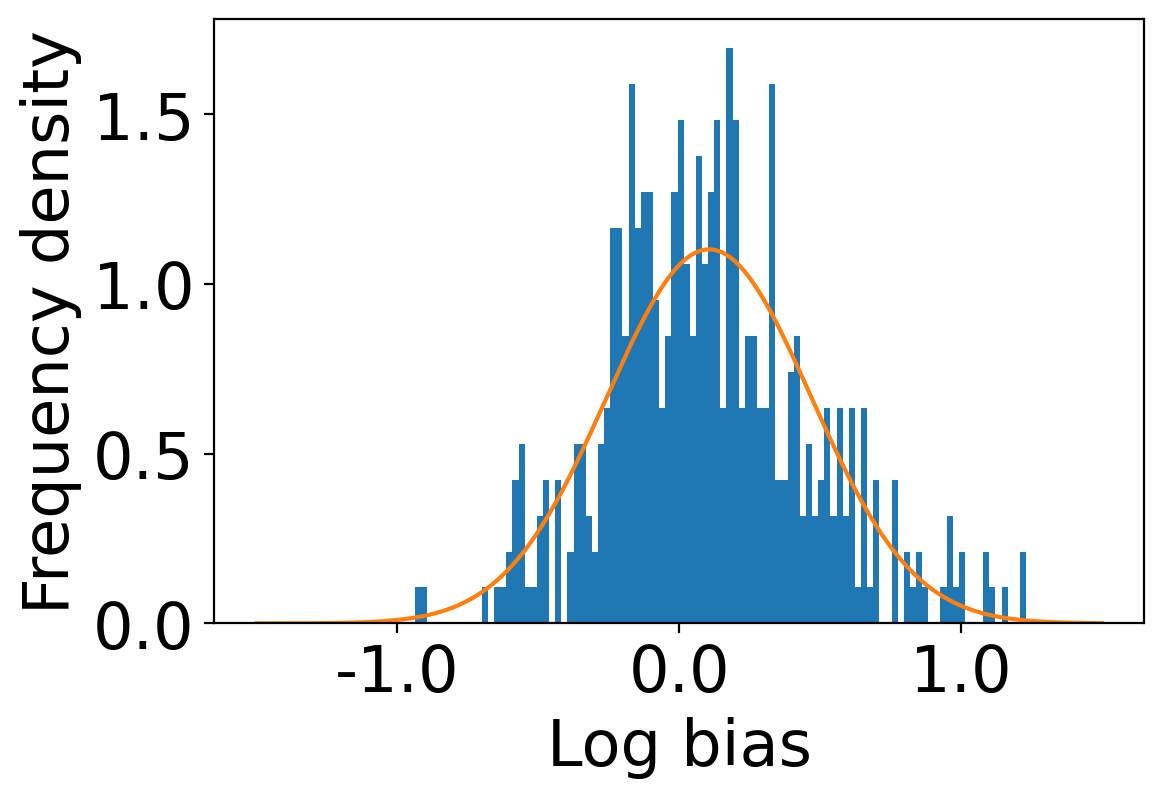}
  \caption{ItemKNN}
  \label{fig:outputBiasWithoutModelItemKNNBX}
\end{subfigure}
\begin{subfigure}{.24\textwidth}
  \centering
  \includegraphics[scale=0.25]{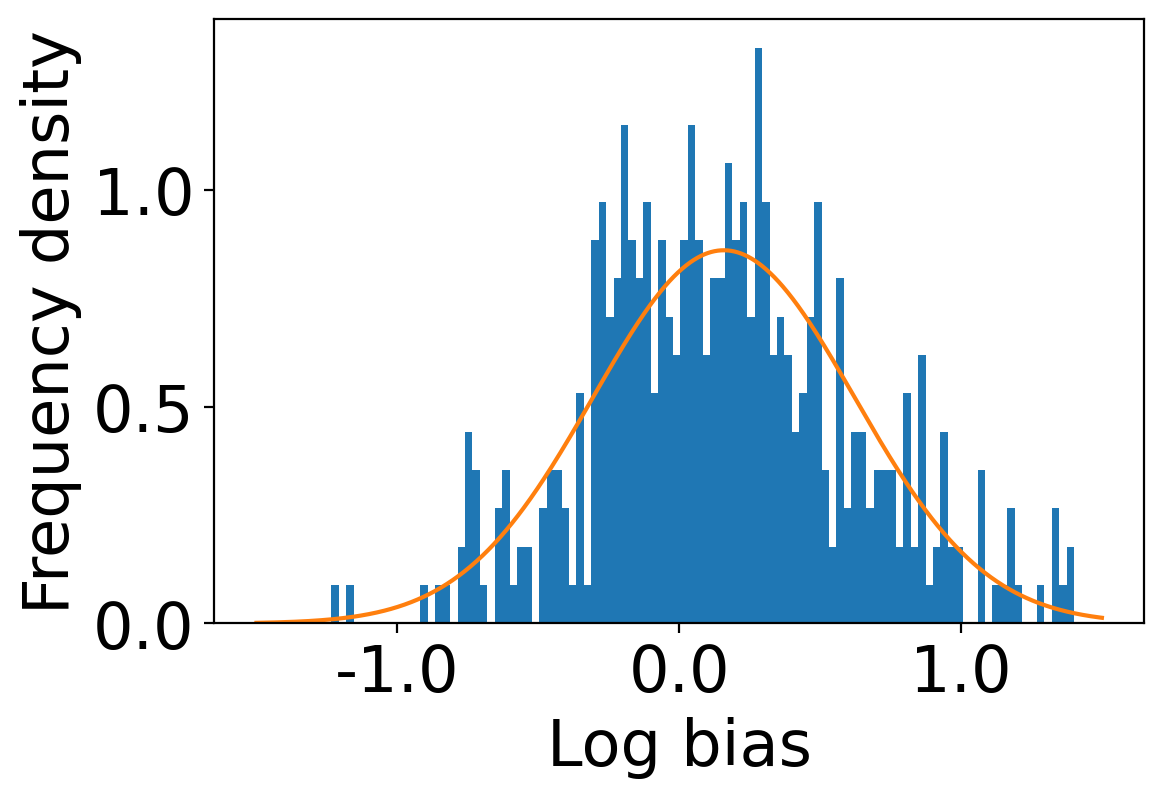}
  \caption{ALS}
  \label{fig:outputBiasWithoutModelAlsBX}
\end{subfigure}
\begin{subfigure}{.24\textwidth}
  \centering
  \includegraphics[scale=0.25]{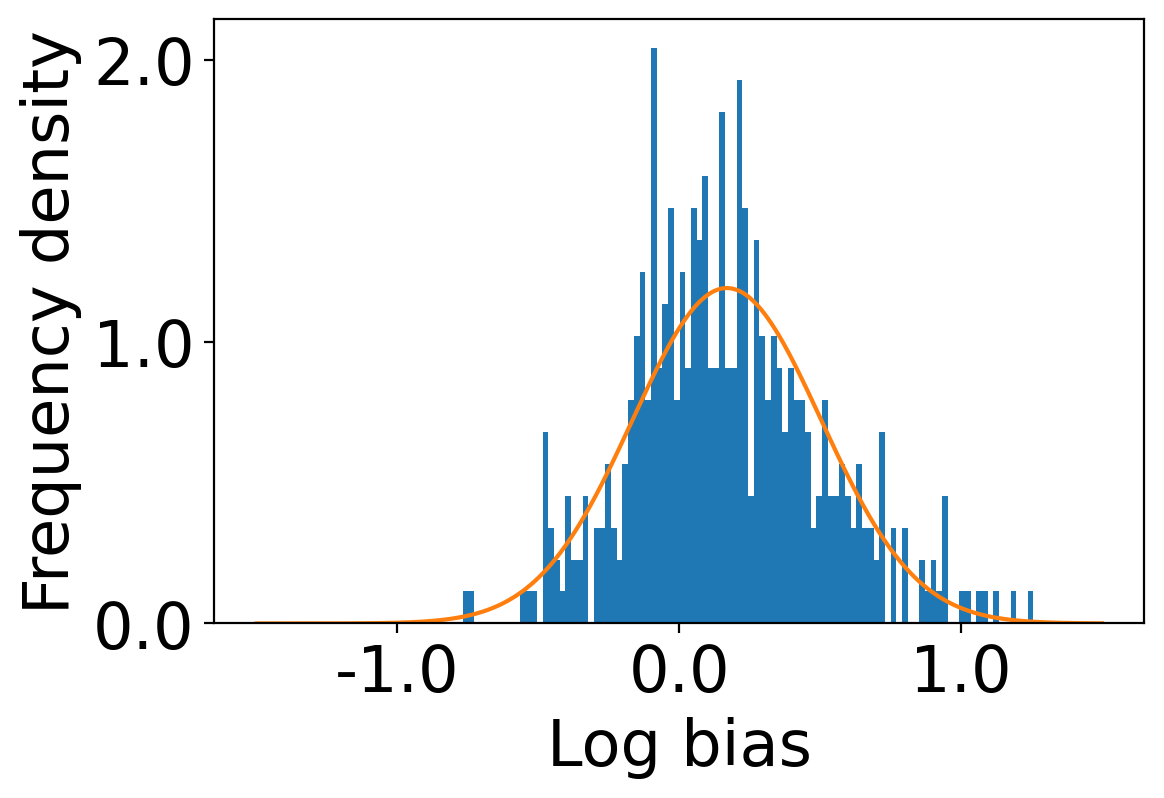}
  \caption{SVD}
  \label{fig:outputBiasWithoutModelSvdBX}
\end{subfigure}
\caption{Output log-bias in BX dataset without employing the model}
\label{fig:outputLogBiasWithoutModelBX}
\end{figure*}


\begin{figure*}[ht!]
\begin{subfigure}{.24\textwidth}
  \centering
  \includegraphics[scale=0.25]{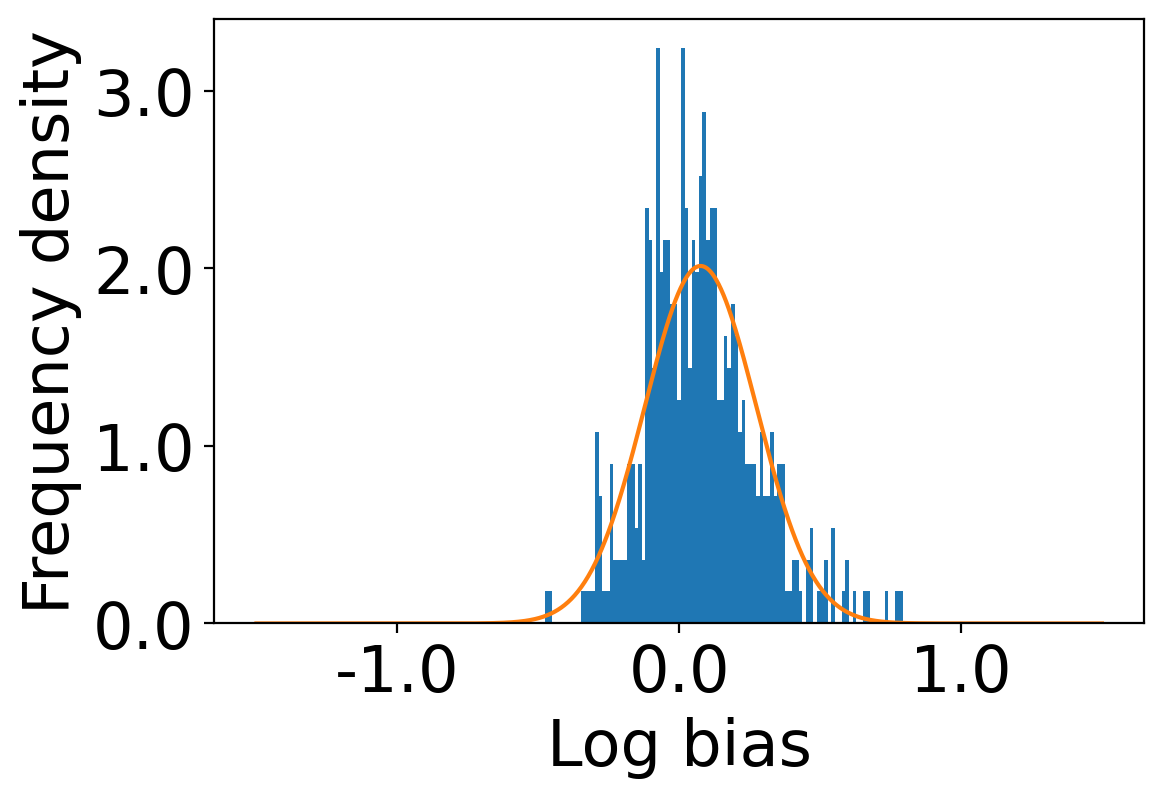}
  \caption{UserKNN}
  \label{fig:outputBiasWithoutAccuracyUserKNNBX}
\end{subfigure}
\begin{subfigure}{.24\textwidth}
  \centering
  \includegraphics[scale=0.25]{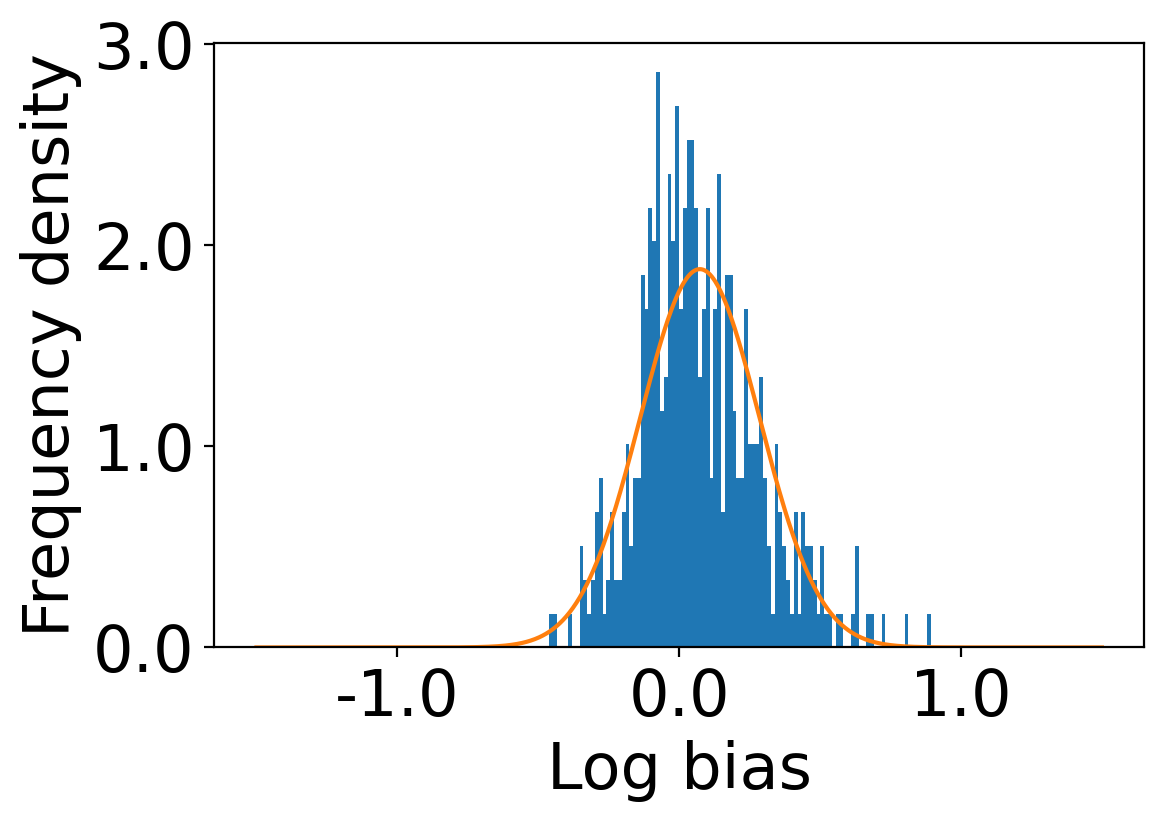}
  \caption{ItemKNN}
  \label{fig:outputBiasWithoutAccuracyItemKNNBX}
\end{subfigure}
\begin{subfigure}{.24\textwidth}
  \centering
  \includegraphics[scale=0.25]{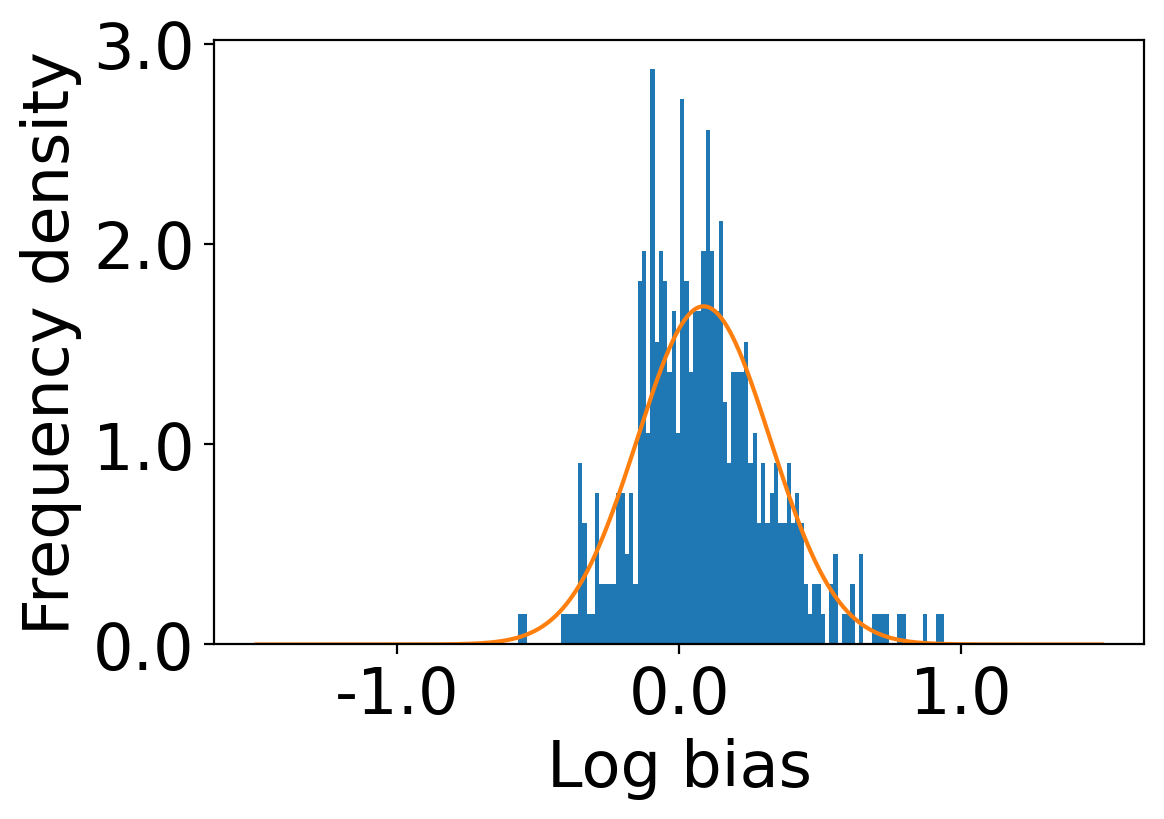}
  \caption{ALS}
  \label{fig:outputBiasWithoutAccuracyAlsBX}
\end{subfigure}
\begin{subfigure}{.24\textwidth}
  \centering
  \includegraphics[scale=0.25]{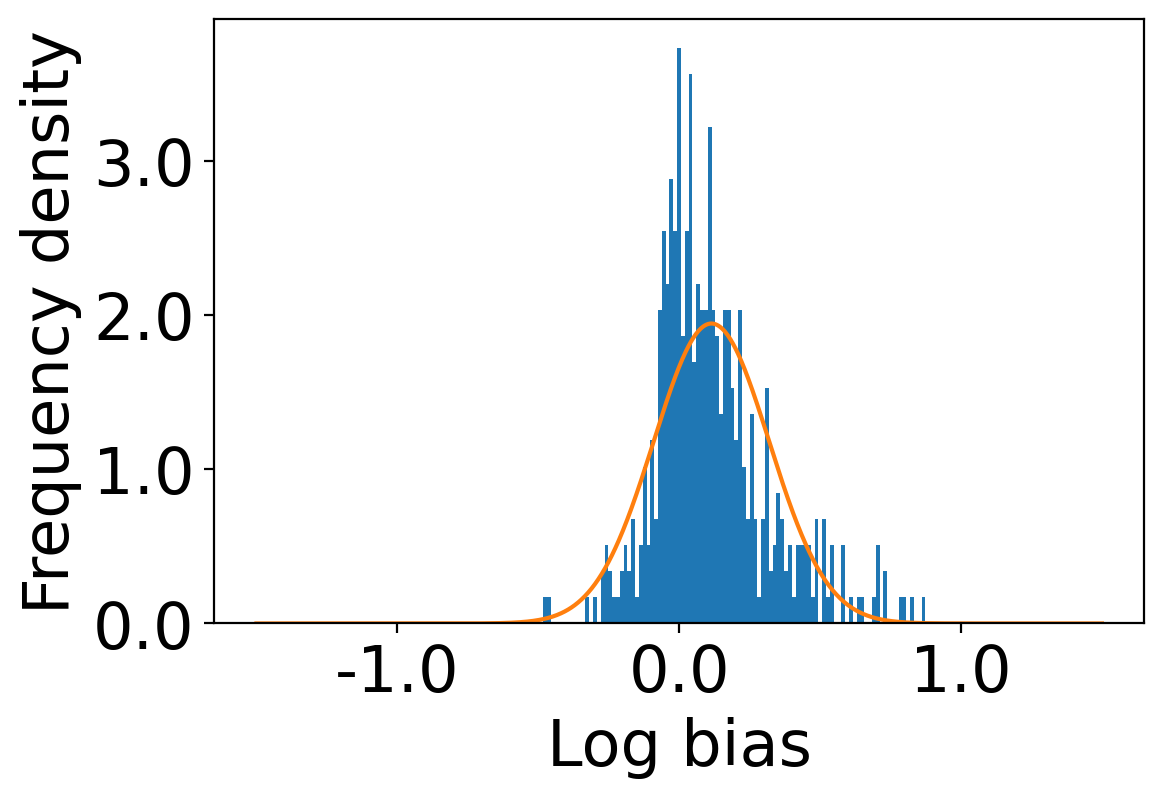}
  \caption{SVD}
  \label{fig:outputBiasWithoutAccuracySvdBX}
\end{subfigure}
\caption{Output log-bias in BX dataset with debiasing}
\label{fig:outputLogBiasWithoutAccuracyBX}
\end{figure*}

\begin{figure*}[]
\begin{subfigure}{.24\textwidth}
  \centering
  \includegraphics[scale=0.25]{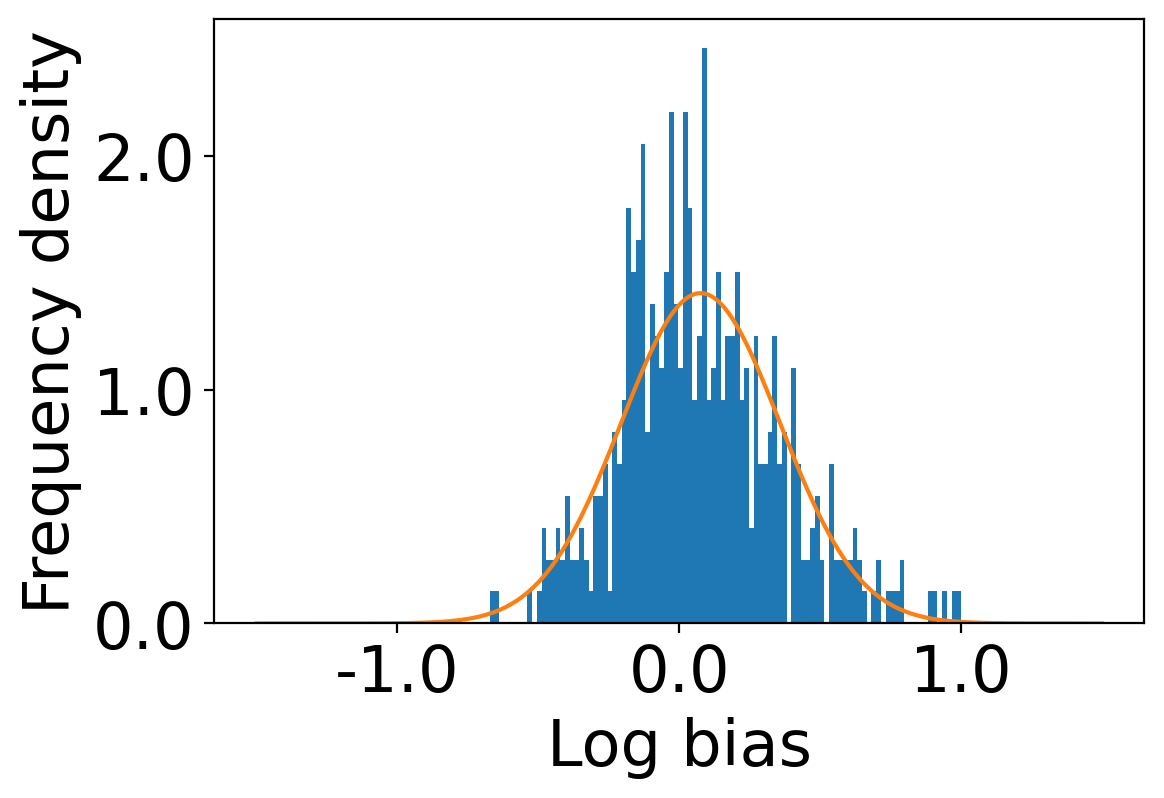}
  \caption{UserKNN}
  \label{fig:outputBiasWithAccuracyUserKNNBX}
\end{subfigure}
\begin{subfigure}{.24\textwidth}
  \centering
  \includegraphics[scale=0.25]{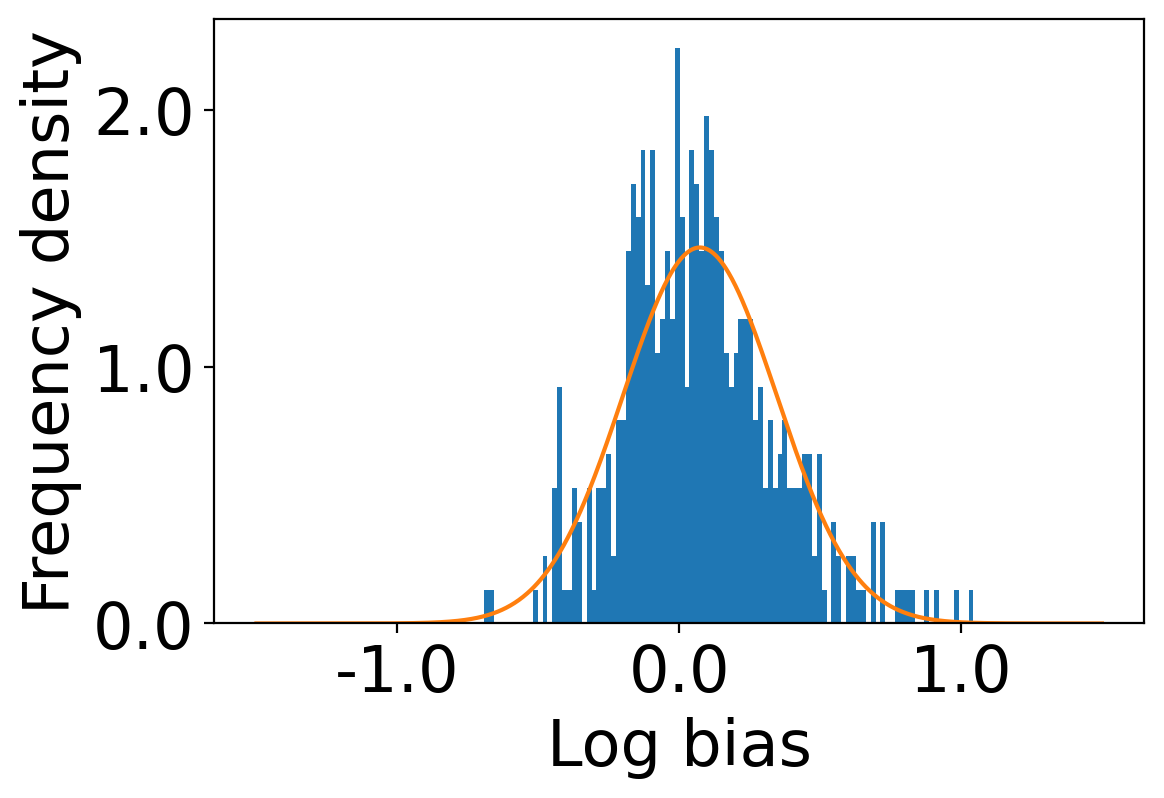}
  \caption{ItemKNN}
  \label{fig:outputBiasWithAccuracyItemKNNBX}
\end{subfigure}
\begin{subfigure}{.24\textwidth}
  \centering
  \includegraphics[scale=0.25]{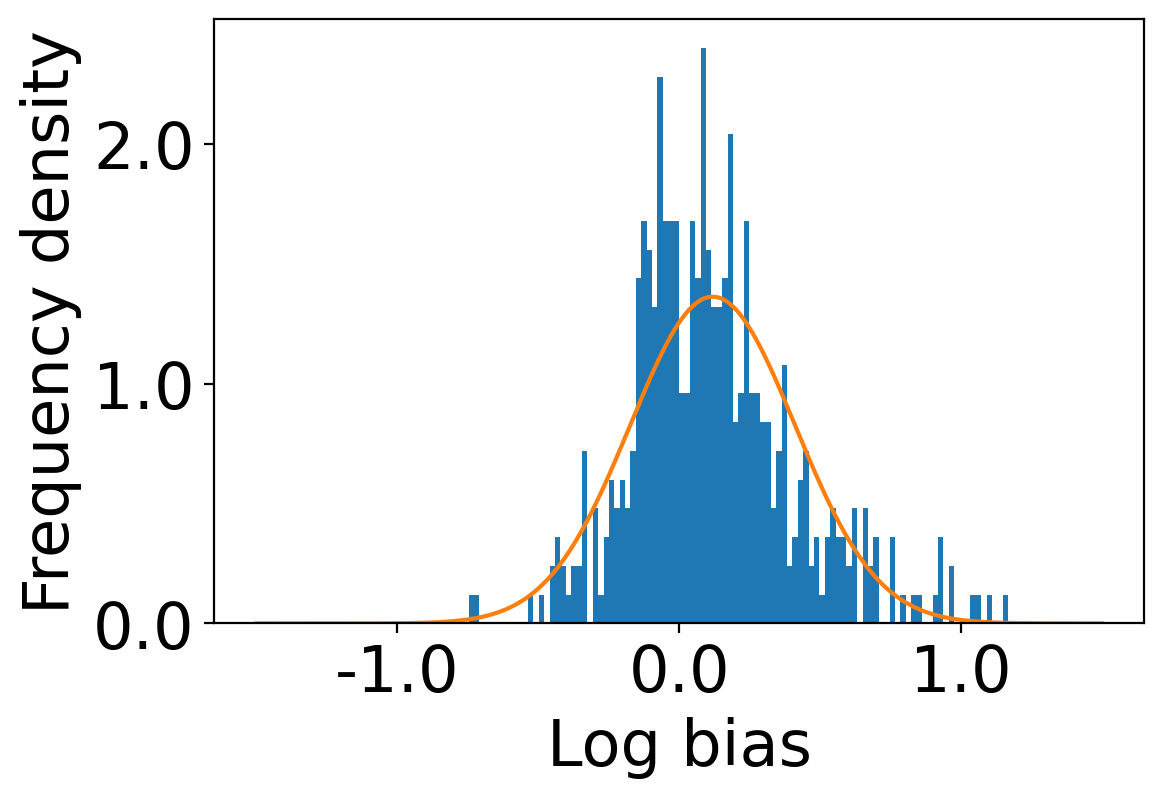}
  \caption{ALS}
  \label{fig:outputBiasWithAccuracyAlsBX}
\end{subfigure}
\begin{subfigure}{.24\textwidth}
  \centering
  \includegraphics[scale=0.25]{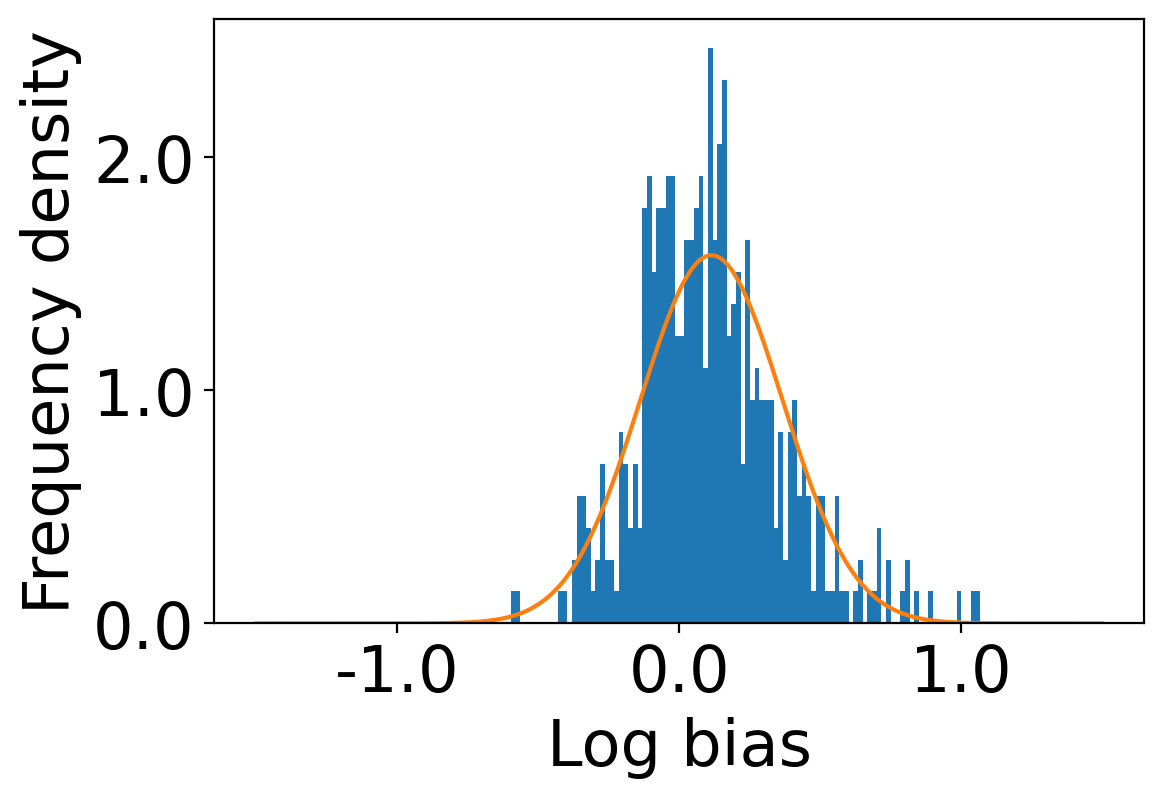}
  \caption{SVD}
  \label{fig:outputBiasWithAccuracySvdBX}
\end{subfigure}
\caption{Output log-bias in BX dataset with reinserting the biases}
\label{fig:outputLogBiasWithAccuracyBX}
\end{figure*}
\begin{table*}[ht!]
\tiny
\centering
\begin{tabular}{|c|c|c|c|c|c|c|c|c|c|c|c|}
\hline
& &\multicolumn{5}{c|}{AZ}                            & \multicolumn{5}{c|}{BX}                       \\ \hline
Case &
  Algorithm &
  \begin{tabular}[c]{@{}c@{}}Mean\\ log-bias\end{tabular} &
  RMSE &
  MAE &
  NDCG &
  \begin{tabular}[c]{@{}c@{}}Reciprocal\\ Rank\end{tabular} &
  \begin{tabular}[c]{@{}c@{}}Mean\\ log-bias\end{tabular} &
  RMSE &
  MAE &
  NDCG &
  \begin{tabular}[c]{@{}c@{}}Reciprocal\\ Rank\end{tabular} \\ \hline
\multirow{4}{*}{\begin{tabular}[c]{@{}c@{}}Recommendations \\ without model\end{tabular}} &
  UserKNN &
  0.137 &
  0.808 &
  0.693 &
  0.452 &
  0.498 &
  0.122 &
  1.580 &
  1.178 &
  0.264 &
  0.272\\ \cline{2-12} 
 & ItemKNN & 0.129 & 0.736 & 0.580  & 0.597  & 0.643 & 0.106 & 1.511 & 1.304 & 0.313  & 0.412   \\ \cline{2-12} 
 & ALS     & 0.164 & 0.873 & 0.829  & 0.281  & 0.447 & 0.158 & 1.815 & 1.642 & 0.235  & 0.370  \\ \cline{2-12} 
 & SVD     & 0.175 & 0.790 & 0.753  & 0.342  & 0.471 & 0.169 & 1.761 & 1.626 & 0.277  & 0.296 \\ \hline
\multirow{4}{*}{\begin{tabular}[c]{@{}c@{}}Recommendations without\\ preference correction\\ phase\end{tabular}} &
  UserKNN &
  0.049 &
  1.103 &
  0.921 &
  0.0224 &
  0.0278 &
  0.057 &
  2.468 &
  1.754 &
  0.0232 &
  0.0245 \\ \cline{2-12} 
  
 & ItemKNN & 0.063 & 1.076 & 0.873  & 0.0229 & 0.0204 & 0.054 & 2.463 & 2.055 & 0.0142 & 0.0271 \\ \cline{2-12} 
 & ALS     & 0.093 & 1.281 & 1.1211 & 0.0161 & 0.0394 & 0.087 & 2.752 & 2.175 & 0.0421 & 0.0736 \\ \cline{2-12} 
 & SVD     & 0.071 & 1.257 & 1.183  & 0.0138 & 0.0206 & 0.072 & 2.601 & 1.979 & 0.0261 & 0.0240\\ \hline
\multirow{4}{*}{\begin{tabular}[c]{@{}c@{}}Recommendations with\\ preference correction\\ phase\end{tabular}} &
  UserKNN &
  0.079 &
  0.871 &
  0.738 &
  0.3982 &
  0.4462 &
  0.076 &
  1.799 &
  1.298 &
  0.2099 &
  0.2317 \\ \cline{2-12} 
 & ItemKNN & 0.080 & 0.824 & 0.661  & 0.5236 & 0.6121 & 0.073 & 1.785 & 1.516 & 0.2538 & 0.3560\\ \cline{2-12} 
 & ALS     & 0.121 & 0.982 & 0.903  & 0.2391 & 0.3853 & 0.119 & 2.022 & 1.768 & 0.1626 & 0.2831\\ \cline{2-12} 
 & SVD     & 0.103 & 0.872 & 0.847  & 0.2989 & 0.4159 & 0.114 & 1.988 & 1.731 & 0.2258 & 0.2481 \\ \hline
\end{tabular}
\caption{Summary of Results}
\label{tab:azResults}
\end{table*}

\begin{figure}[]
\begin{subfigure}{.45\textwidth}
  \centering
  \includegraphics[scale=0.4]{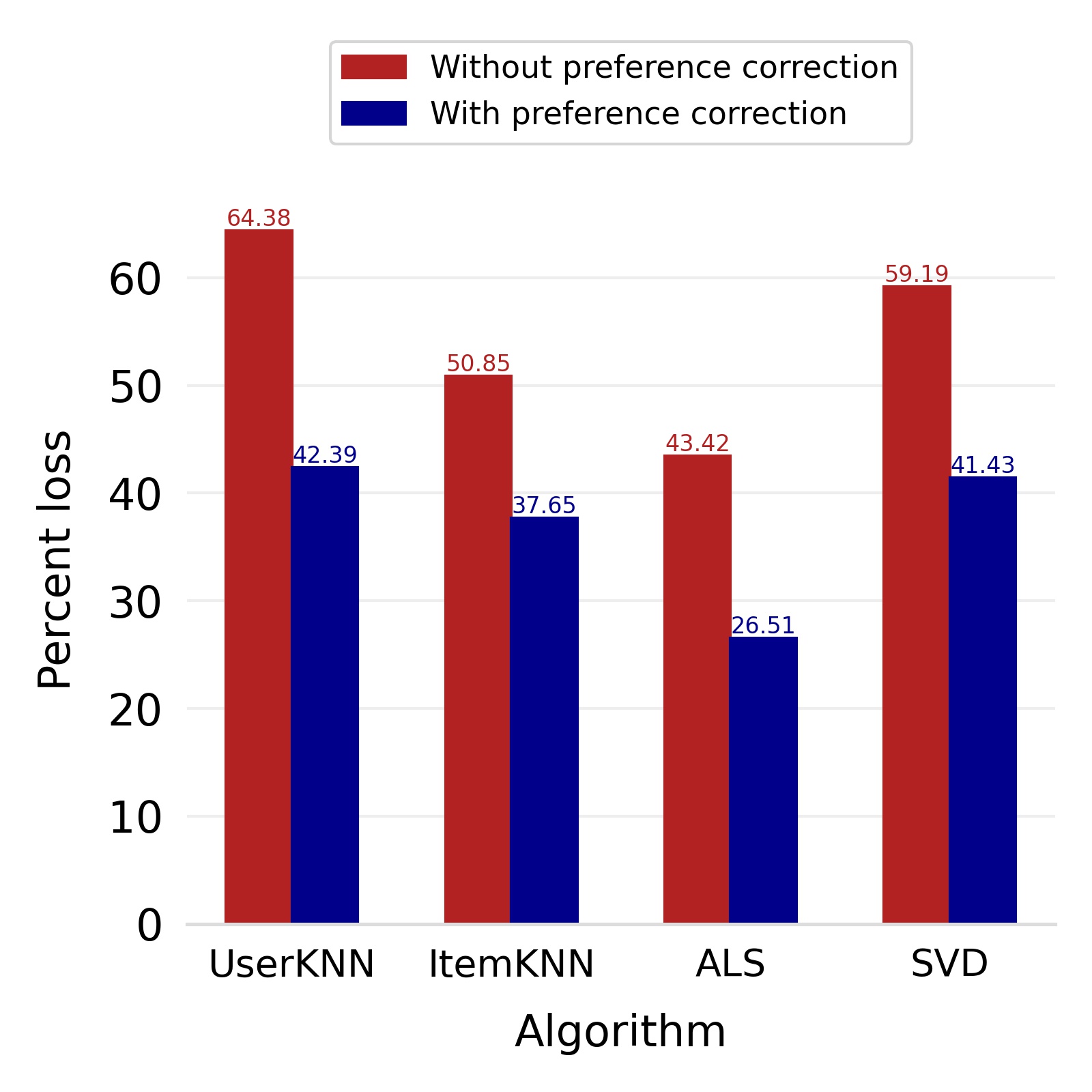}  
  \caption{AZ dataset}
  \label{fig:biasReductionAZ}
\end{subfigure}
\begin{subfigure}{.45\textwidth}
  \centering
  \includegraphics[scale=0.4]{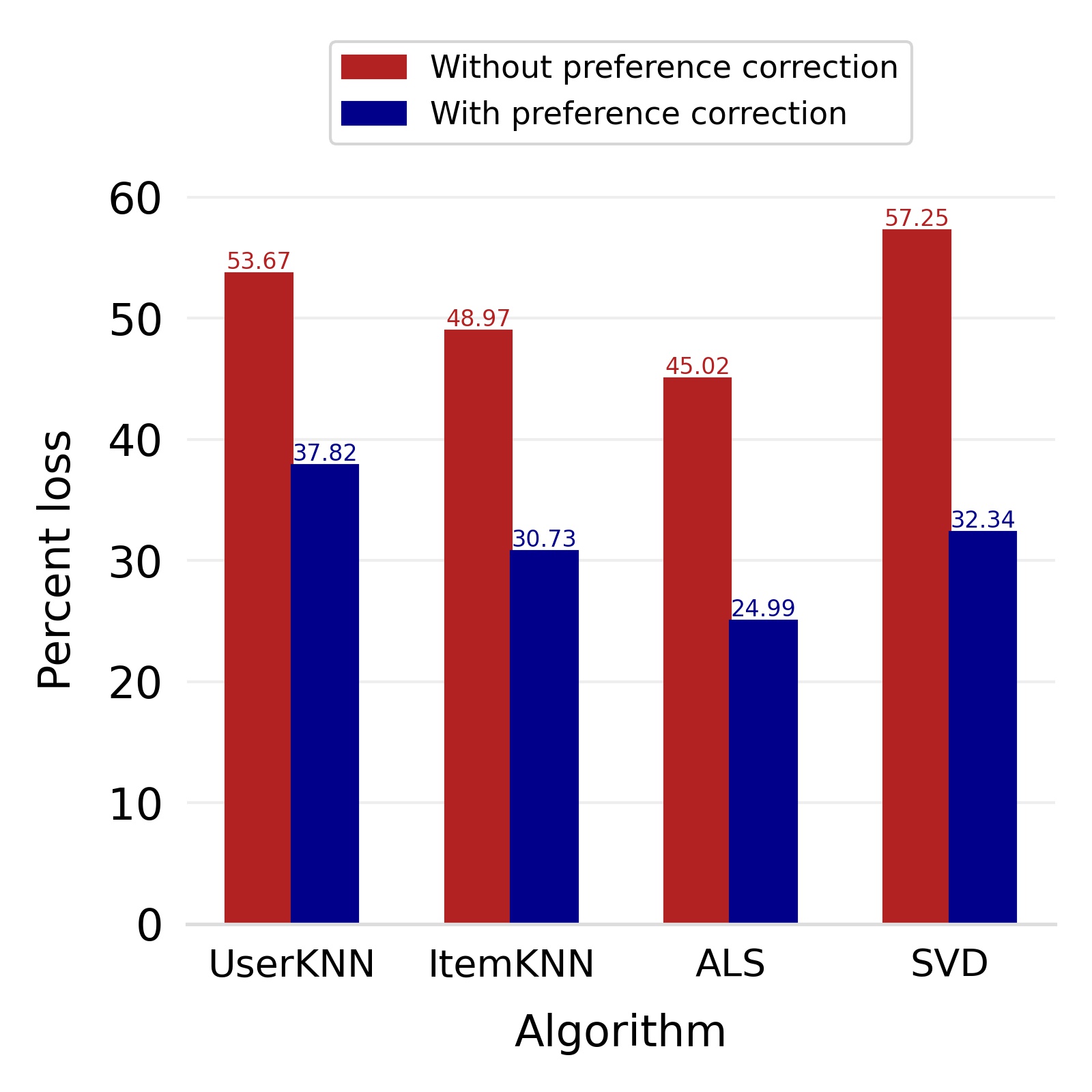}  
  \caption{BX dataset}
  \label{fig:biasReductionBX}
\end{subfigure}
\caption{Bias reduction}
\label{fig:biasReduction}
\end{figure}

\begin{figure}[ht!]
\begin{subfigure}{.45\textwidth}
  \centering
  \includegraphics[scale=0.4]{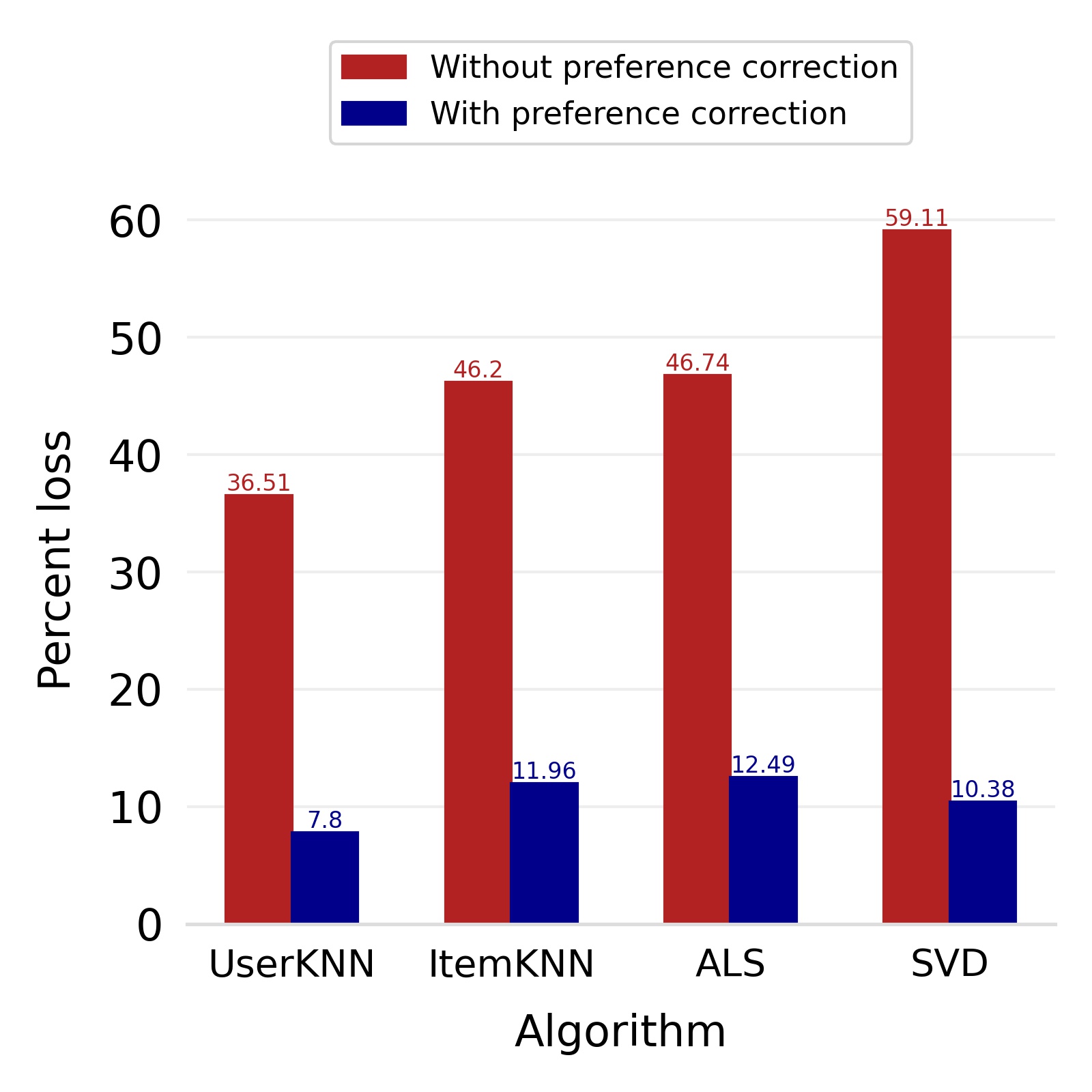}  
  \caption{In terms of RMSE}
  \label{fig:rmseAZ}
\end{subfigure}
\begin{subfigure}{.45\textwidth}
  \centering
  \includegraphics[scale=0.4]{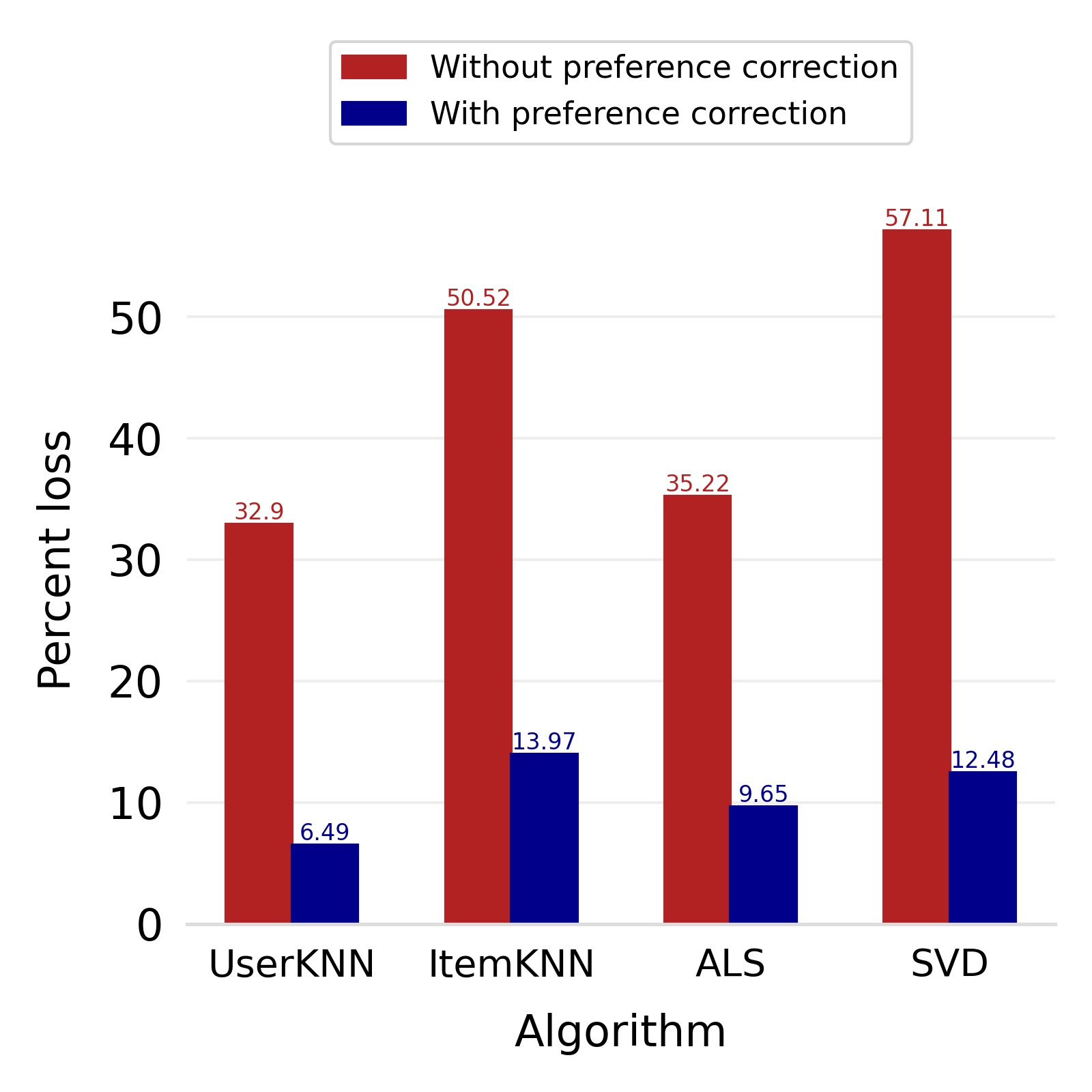}  
  \caption{In terms of MAE}
  \label{fig:maeAZ}
\end{subfigure}
\caption{Accuracy loss for AZ dataset}
\label{fig:accuracyLossAZ}
\end{figure}
\begin{figure}[ht!]
\begin{subfigure}{.45\textwidth}
  \centering
  \includegraphics[scale=0.4]{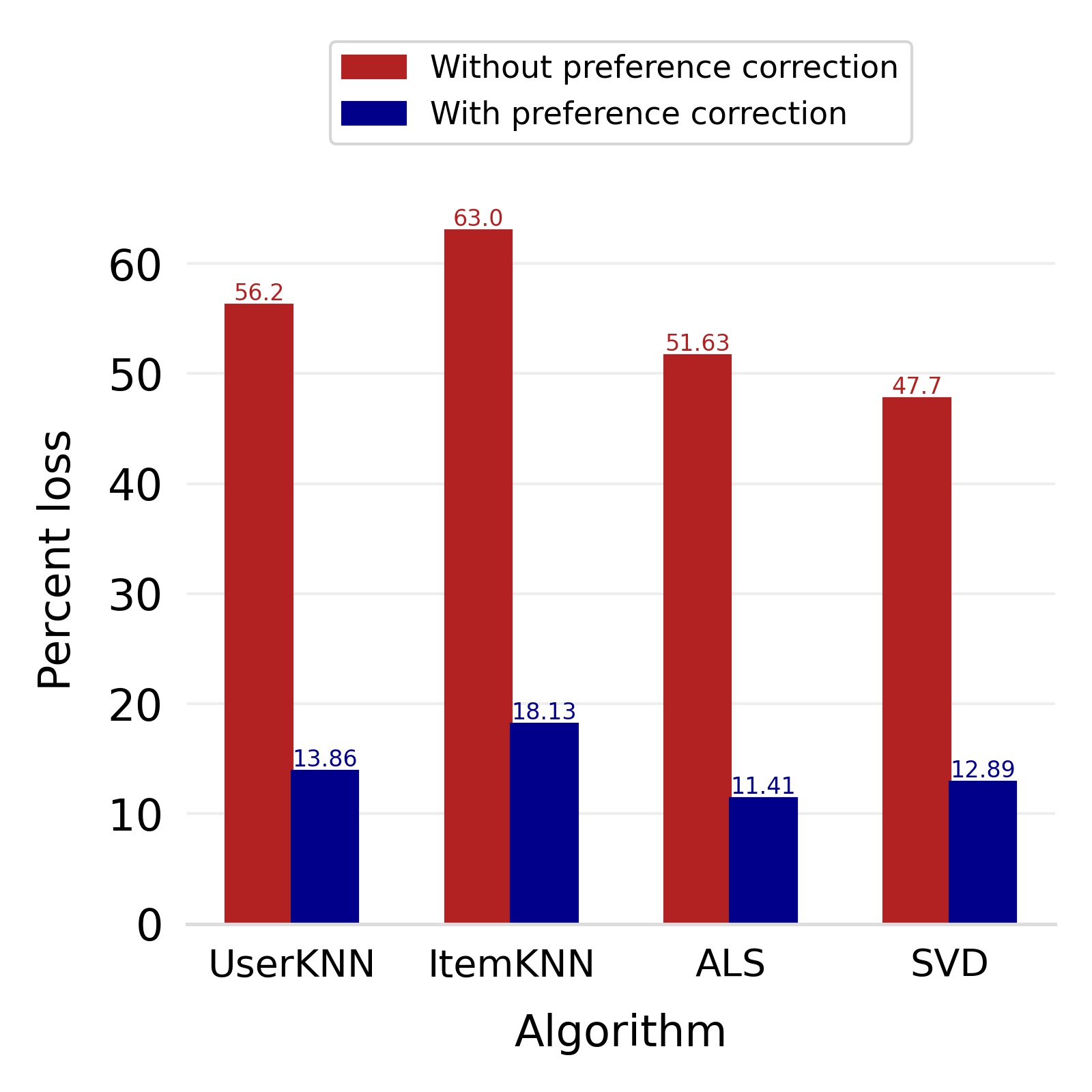}  
  \caption{In terms of RMSE}
  \label{fig:rmseBX}
\end{subfigure}
\begin{subfigure}{.45\textwidth}
  \centering
  \includegraphics[scale=0.4]{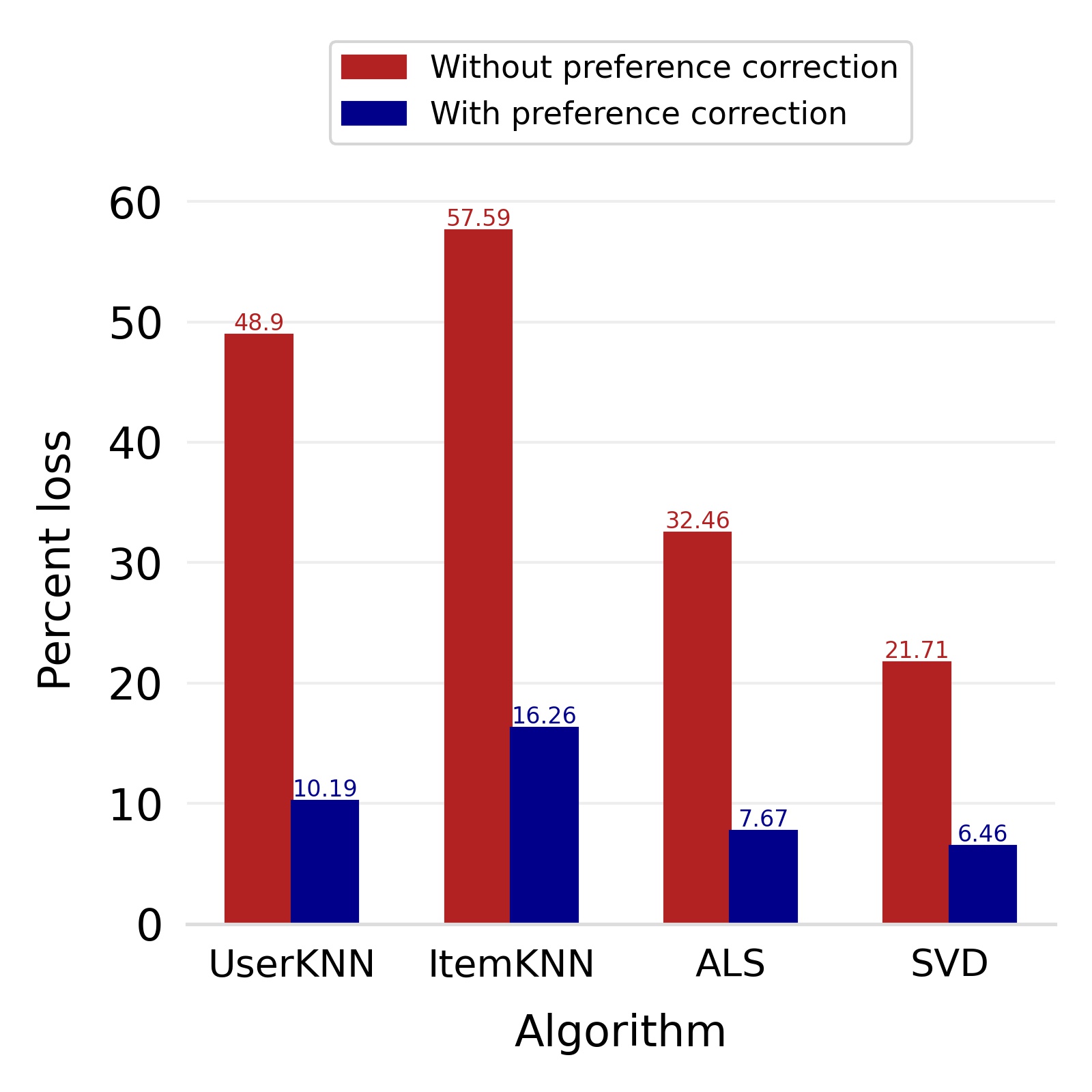}  
  \caption{In terms of MAE}
  \label{fig:maeBX}
\end{subfigure}
\caption{Accuracy loss for BX dataset}
\label{fig:accuracyLossBX}
\end{figure}


\begin{figure}[ht!]
\begin{subfigure}{.45\textwidth}
  \centering
  \includegraphics[scale=0.4]{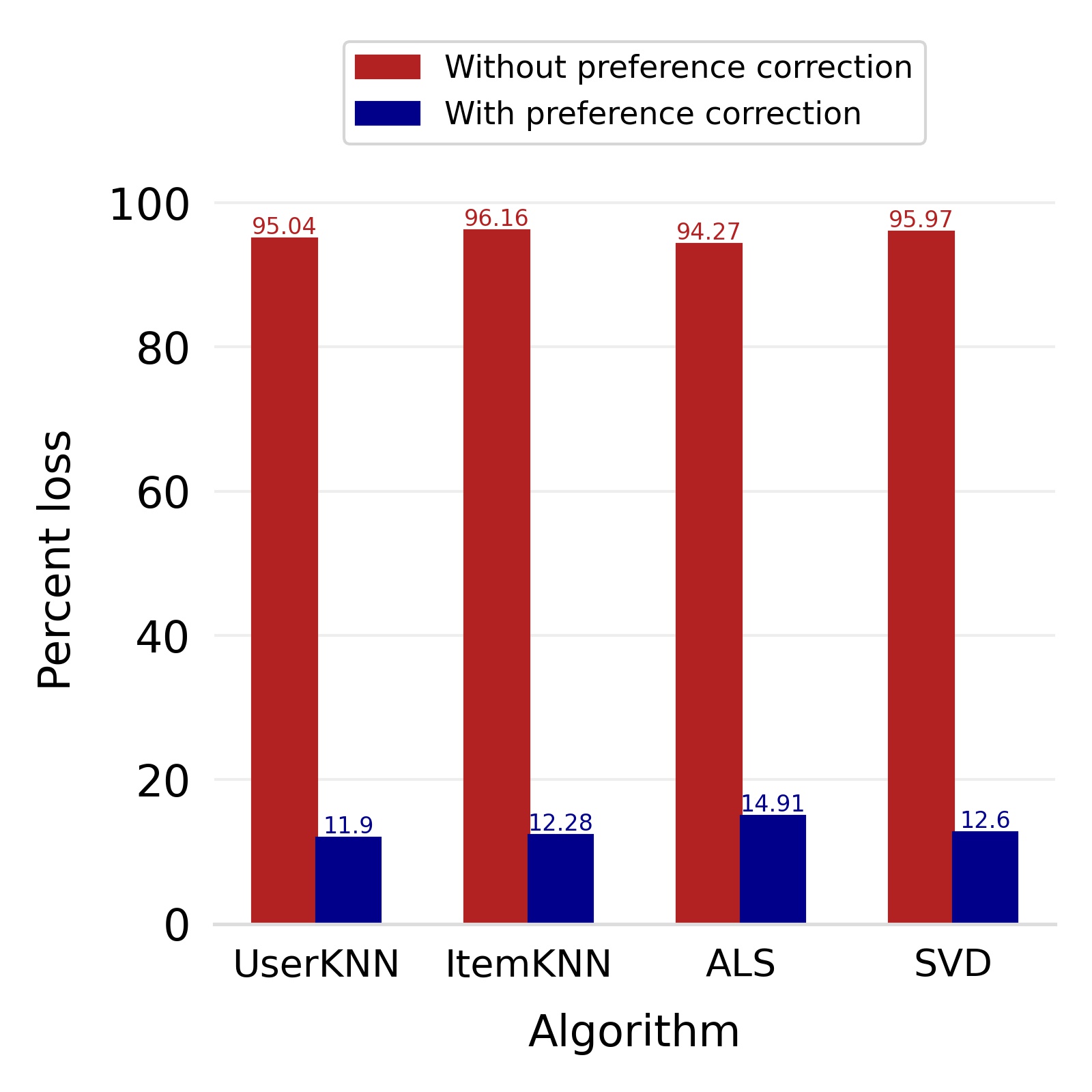}  
  \caption{In terms of NDCG}
  \label{fig:ndcgAZ}
\end{subfigure}
\begin{subfigure}{.45\textwidth}
  \centering
  \includegraphics[scale=0.4]{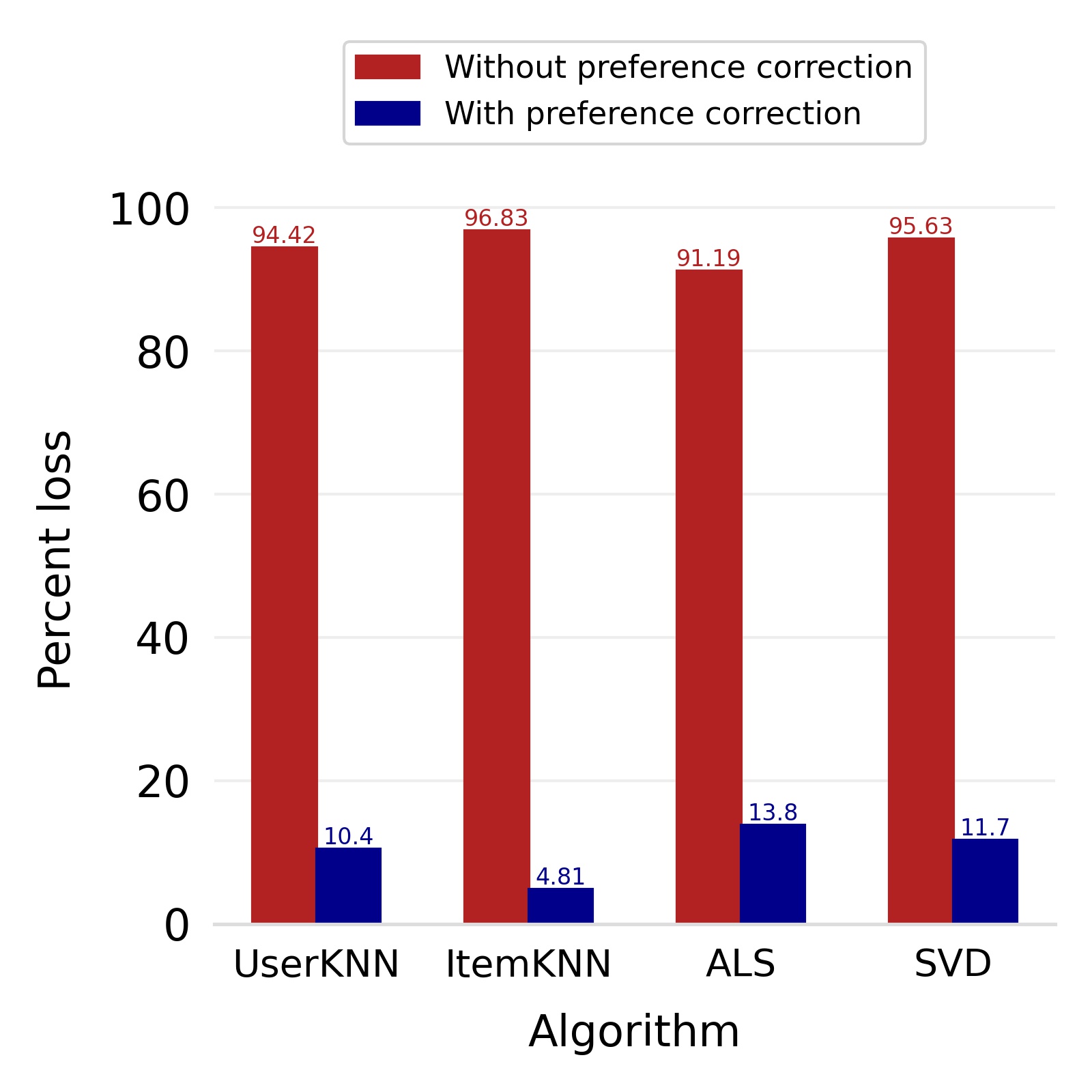}  
  \caption{In terms of Reciprocal Rank}
  \label{fig:recipAZ}
\end{subfigure}
\caption{Ranking relevancy loss for AZ dataset}
\label{fig:rankingRelevancyLossAZ}
\end{figure}




\begin{figure}[ht!]
\begin{subfigure}{.45\textwidth}
  \centering
  \includegraphics[scale=0.4]{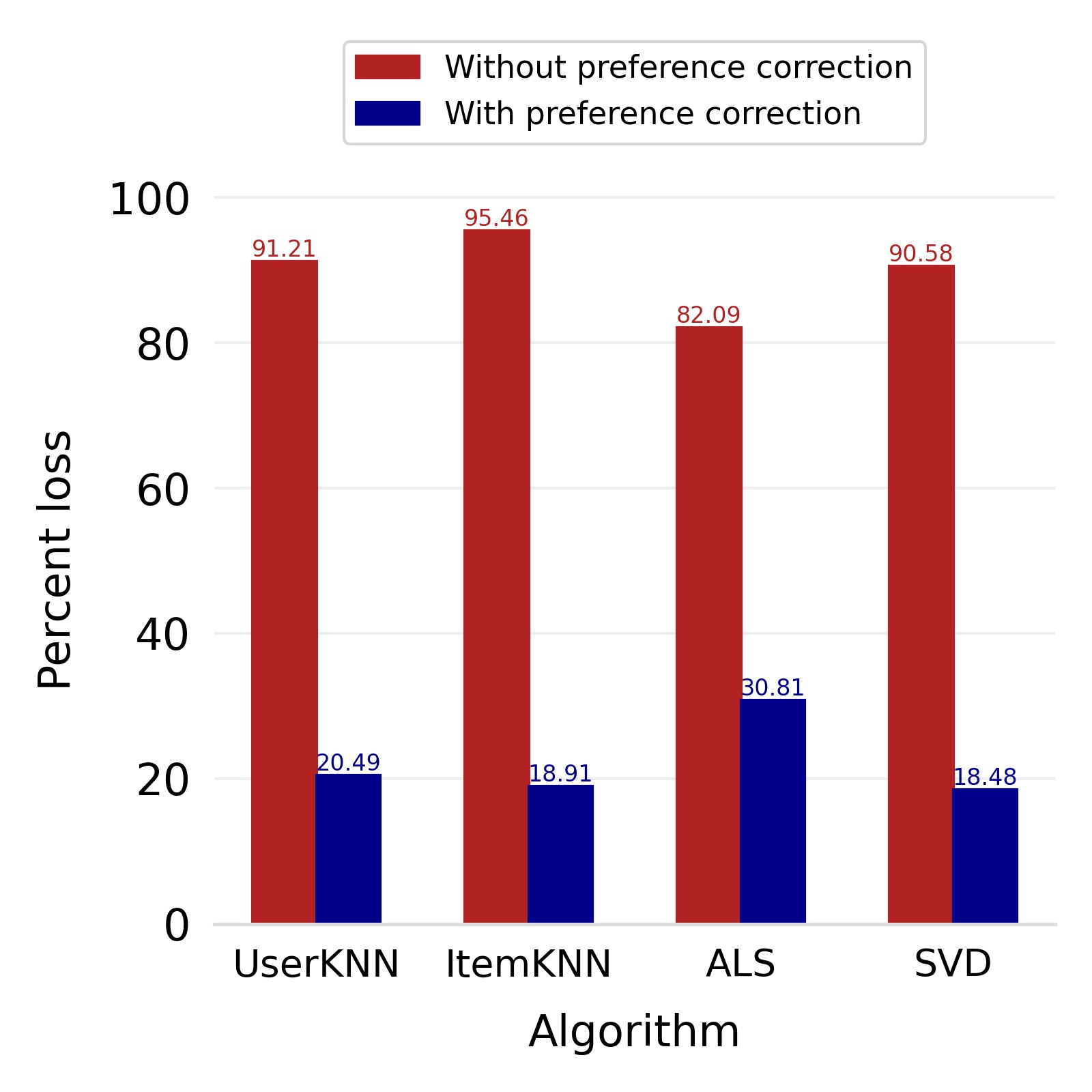}  
  \caption{In terms of NDCG}
  \label{fig:ndcgBX}
\end{subfigure}
\begin{subfigure}{.45\textwidth}
  \centering
  \includegraphics[scale=0.4]{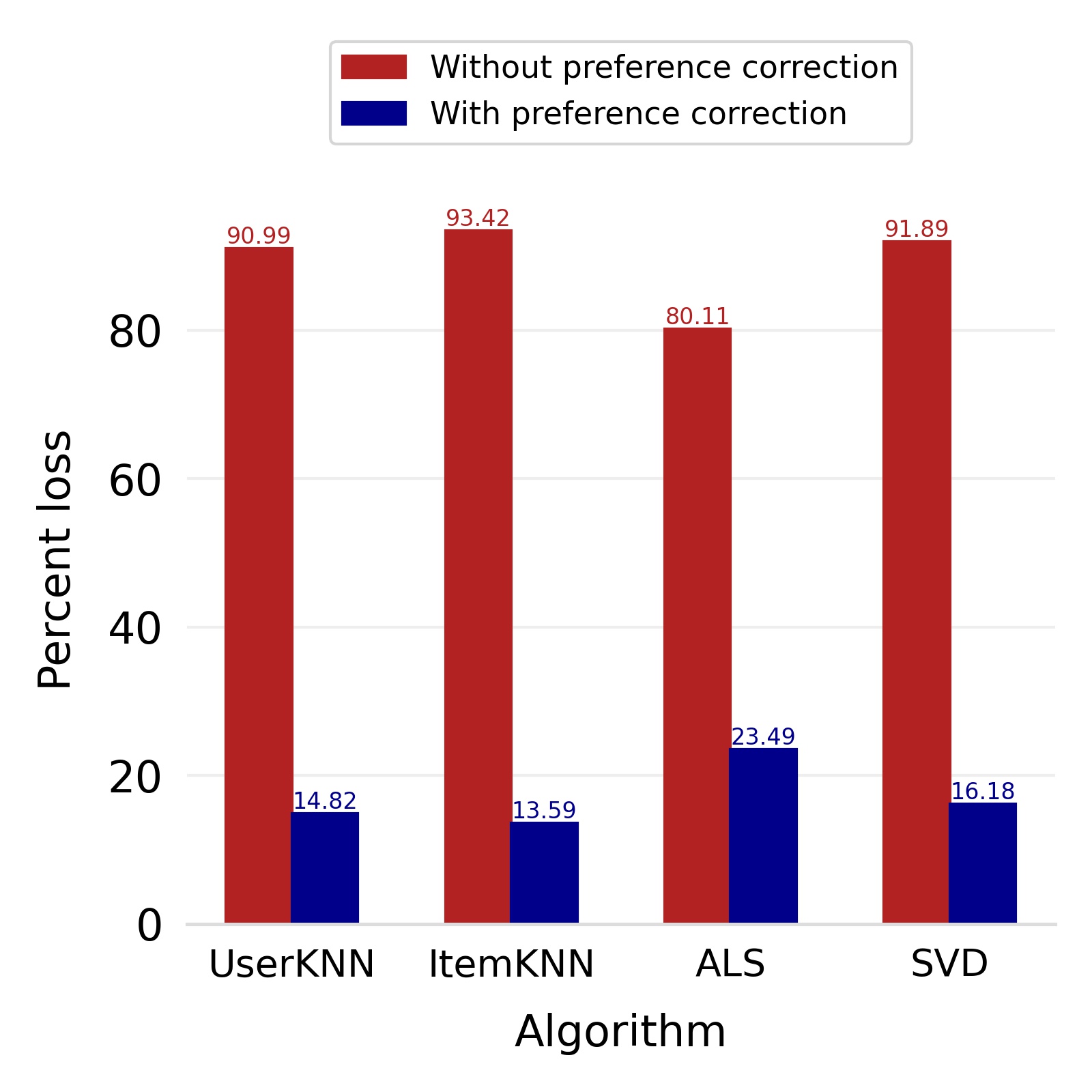}  
  \caption{In terms of Reciprocal Rank}
  \label{fig:recipBX}
\end{subfigure}
\caption{Ranking relevancy loss for BX dataset}
\label{fig:rankingRelevancyLossBX}
\end{figure}


We next conduct significance testing to validate the log-bias reduction. Table \ref{tab:significanceTestResults} shows the p-values obtained from left-tail significance tests on the log-bias of the recommendations made for the users in the sample. We can see from the p-value for the Amazon datasets that the bias reduction is significant. For the Book-Crossing dataset, the significance of the bias reduction is less pronounced. One of the prominent reasons for this is that the test sample size for the Book-Crossing dataset was relatively small due to the small number of users in the dataset. In essence, the utility of the recommender system is maintained while reducing the log-bias tendency in the recommendations.

\begin{table*}[]
\centering
\small
\begin{tabular}{|c|c|c|c|c|c|c|c|c|c|c|}
\hline
          & \multicolumn{5}{c|}{AZ}                            & \multicolumn{5}{c|}{BX}                       \\ \hline
Algorithm & $\bar{x}$ & $\mu$ & $\sigma$ & $z$    & $p$        & $\bar{x}$ & $\mu$ & $\sigma$ & $z$    & $p$   \\ \hline
UserKNN   & 0.079     & 0.137 & 0.307    & -17.90 & $<10^{-5}$ & 0.076     & 0.122 & 0.343    & -1.164 & 0.122 \\ \hline
ItemKNN   & 0.080     & 0.129 & 0.381    & -12.06 & $<10^{-5}$ & 0.073     & 0.106 & 0.362    & -0.780 & 0.218 \\ \hline
ALS       & 0.121     & 0.164 & 0.394    & -10.46 & $<10^{-5}$ & 0.119     & 0.158 & 0.464    & -0.738 & 0.230 \\ \hline
SVD       & 0.103     & 0.175 & 0.354    & -19.27 & $<10^{-5}$ & 0.114     & 0.169 & 0.335    & -1.413 & 0.079 \\ \hline
\end{tabular}
\caption{Significance test results for bias reduction}
\label{tab:significanceTestResults}
\end{table*}

We further observe that the bias reduction is more in the case of UserKNN based recommendations than the ItemKNN based recommendations. This observation can be attributed to the fact that our model addresses the bias originating from the distortion in ratings from the users' side. It compares the ratings of an item given by a particular user with the appropriately scaled average of ratings given by other users to that item in the dataset. It, therefore, resonates with the UserKNN algorithm, which predicts the ratings of an item for a particular user based on the ratings of that item for his or her peers. The ItemKNN algorithm, on the other hand, predicts the ratings of an item for a particular user based on the ratings given to similar items by that user. The model does not sit squarely with ItemKNN. Thus the bias reduction in UserKNN is more as compared to that in the case of ItemKNN. We further observe that the bias reduction is more in the case of the AZ dataset as compared to the BX dataset. This observation can be attributed to the AZ dataset having a higher input mean log-bias tendency. Further, the AZ dataset has a significantly larger number of users and items which leads to a more accurate estimation of user bias scores and, therefore, more effective bias mitigation.

We observe that accuracy and ranking relevancy loss is, in general, higher for ItemKNN as compared to UserKNN. This is due to the fact that the model quantifies the bias of users by comparing the ratings given by them to particular items with a scaled average of ratings given by their peers to those items. This resonates with the UserKNN algorithm, which predicts user ratings for particular items based on the ratings of similar users. Thus the model is better oriented towards the UserKNN algorithm, giving better accuracy and bias reduction in its case. In the case of matrix factorization algorithms, the accuracy and ranking relevancy losses are relatively comparable. It is not clear which one of the two algorithms is more coherent with the model.

We further observe that accuracy loss on BX dataset is higher than that of AZ dataset. This observation can be attributed to the fact that the user and item base of the AZ dataset is higher as compared to the BX dataset. Thus, the bias score estimates are more accurate, which provides more accurate predictions of the item scores for the users when reinserted into the recommendations.



\section{Conclusion and Future Work}
We proposed a model to quantify and mitigate the bias in the explicit feedback given by the users to different items. We theoretically showed that the debiased ratings produced by our model are unbiased estimators of the true preference of the users for the books. With the help of comprehensive experiments on two publically available book datasets, we show a significant reduction in the bias (almost 40\%) with just 10\% decrease in accuracy using the UserKNN algorithm. Similar trends were observed for other algorithms such as ItemKNN, ALS, and SVD. Our model is independent of these algorithms' choices and can be applied with any recommendation algorithm. We used book recommender system because we were able to generate the gender information from publicly available APIs. Our model is not restricted to book recommender system as long as protected attribute information about the items is known. We leave extension of the model to missing protected attribute as an interesting future work. It will be an interesting direction to see if the ideas from fair classification literature with missing protected attributes \cite{coston2019fair} can be leveraged. We further did not address the bias originating from fewer ratings for a female-authored book than a male-authored one. 
We leave extending the model to the bias originating from lesser number of ratings and extensively studying the model for other recommender systems as the future directions.
\newpage 
\bibliographystyle{ACM-Reference-Format}
\bibliography{sample-base}
\end{document}